\theoremstyle{plain}
\newtheorem{lemma}{Lemma}
\newtheorem{theorem}[lemma]{Theorem}
\newtheorem{proposition}[lemma]{Proposition}
\newtheorem{corollary}[lemma]{Corollary}
\newtheorem{definition}[lemma]{Definition}
\renewcommand{\epsilon}{\varepsilon}
\newcommand{\E}{\mathcal{E}}
\newcommand{\N}{\mathbb{N}}
\newcommand{\1}{{\openone}}
\newcommand{\be}{\begin{eqnarray}}
\newcommand{\ee}{\end{eqnarray}}
\newcommand{\ba}{\begin{array}}
\newcommand{\ea}{\end{array}}
\newcommand{\ra}{\rangle}
\newcommand{\la}{\langle}
\newcommand{\mc}{\mathcal}
\DeclareMathOperator{\tr}{Tr}
\newcommand{\diag}{\mathrm{diag}}
\newcommand{\ben}{\begin{enumerate}}
\newcommand{\een}{\end{enumerate}}
\newcommand{\bi}{\begin{itemize}}
\newcommand{\ei}{\end{itemize}}
\begin{document}

\title{Irreducible forms of Matrix Product States: Theory and Applications}

\author{Gemma De las Cuevas}
\address{Institut f\"ur Theoretische Physik, Universit\"at Innsbruck, 
Technikerstr.\ 21a, 6020 Innsbruck, Austria} 
\author{J. Ignacio Cirac}
\address{Max Planck Institute for Quantum Optics, 
Hans-Kopfermann-Str.\ 1, 85748 Garching, Germany}
\author{Norbert Schuch}
\address{Max Planck Institute for Quantum Optics, 
Hans-Kopfermann-Str.\ 1, 85748 Garching, Germany}
\author{David Perez-Garcia}
\address{Departamento de An\'alisis Matem\'atico and IMI, 
Universidad Complutense de Madrid, 28040 Madrid, Spain}
\address{ICMAT, C/ Nicol\'as Cabrera, Campus de Cantoblanco, 28049 Madrid,
Spain}

\begin{abstract}
The canonical form of Matrix Product States (MPS) and the associated
fundamental theorem, which relates different MPS representations of a
state, are the theoretical framework underlying many of the
analytical results derived through MPS, such as the classification of
symmetry-protected phases in one dimension.  Yet, the canonical form
is only defined for MPS without non-trivial periods, and thus cannot fully
capture paradigmatic states such as the antiferromagnet.  Here, we
introduce a new standard form for MPS, the \emph{irreducible form}, which
is defined for arbitrary MPS, including periodic states, and show that any
tensor can be transformed into a tensor in irreducible form describing the
same MPS. We then prove a fundamental theorem for MPS in irreducible form:
If two tensors in irreducible form give rise to the same MPS, then they
must be related by a similarity transform, together with a matrix of
phases.
We provide two applications of this result: an equivalence between
the refinement properties of a state and the divisibility properties of
its transfer matrix,  and a more general characterisation of tensors that
give rise to matrix product states with symmetries.  \end{abstract} 

\maketitle

\section{Introduction}
\label{sec:introduction}

The description of quantum many-body systems is not scalable due to the
exponential growth of the Hilbert space dimension with the number of subsystems.
The desire to develop efficient techniques to analyse strongly correlated
systems has motivated the program of tensor networks, which are a
theoretical and numerical tool \cite{Ve08,Or14b} to describe these systems
in various settings (various physical dimensions, with various symmetries,
appropriate for describing ground states of gapped or gapless
Hamiltonians, etc).  The simplest and most thoroughly studied type of tensor
networks are Matrix Product States (MPS)~\cite{Fa92,Sc11d,Pe07}, which are
suitable for describing ground states of one-dimensional gapped
Hamiltonians.  Among many other things, they have allowed to characterise the
symmetries of states in terms of the corresponding tensors
\cite{Pe08}, and have given rise to a classification of gapped phases in
one-dimensional systems~\cite{Ch11,Sc11,Po10}.

One of the most interesting features of MPS is that they allow to describe
families of  translationally invariant systems, even in the thermodynamic
limit, in a simple and concise way. Any rank-three tensor $A$, with
coefficients $A^i_{\alpha,\beta}$ with  $i=1,\ldots,d,$ and
$\alpha,\beta=1,\ldots,D$,  generates a family of translationally
invariant MPS,\footnote{Strictly speaking, this is a family of matrix
product vectors \cite{Ci15}, but we shall refer to them as matrix product
states in this paper.} namely
\be
 \mc{V}(A)=\left\{ 
 |V_N(A)\ra = \sum_{i_1,\ldots, i_N=1}^d\tr(A^{i_1}\cdots A^{i_N})|i_1,\ldots, i_N\ra\in (\mathbb{C}^d)^{\otimes N}\right\}_{N\in \mathbb{N}},
 \label{eq:VA}
 \ee 
  where each $ |V_N(A)\ra$ corresponds to a state of $N$ spins of physical
dimension $d$,  and the $D\times D$ matrices $A^i$ have coefficients
$A^i_{\alpha,\beta}$. Thus, the properties of this whole family of states
are completely determined by the tensor $A$, and therefore by a number of
coefficients that is independent of $N$.  This fact has enabled to study
many-body systems by just analysing the properties of $A$. 

Many of the results obtained for MPS rely on what is called the canonical
form, together with the associated ``fundamental theorem'' which relates different
MPS representations of a family of states~\cite{Ci15}: 
Different tensors $A$ and $B$ can generate the same family of states,
$\mc{V}(A)=\mc{V}(B)$,  which introduces an ambiguity when analysing
many-body states in terms of the tensors that generate them, being
crucial e.g.\ in the analysis of phases under symmetries.  The
canonical form constitutes a specific normal form into which an MPS
tensor can be brought without changing the family of states it generates,
and which has a number of favorable properties.  The fundamental
theorem of MPS then asserts that for any two tensors $A$ and $B$ in
canonical form for which $\mc{V}(A)=\mc{V}(B)$, there exists a simple
local gauge transformation relating them: There exists some invertible
matrix $Y$ such that $A^i=Y B^i Y^{-1}$, and thus, the matrices $A^i$ and
$B^i$ are the same up to a change of basis given by $Y$.

However, the canonical form is not defined for all possible tensors $A$.
Specifically, this excludes translationally invariant MPS which are
superpositions of states with non-trivial periodicity $m>1$, such as the
antiferromagnetic state $|0,1,0,1\ldots \ra + |1,0,1,0\ldots \ra$ with
period $m=2$. The way to deal with such states has hitherto been to block $m$
spins, yielding an MPS which is translationally invariant in a trivial
way, at which point the canonical form and the fundamental theorem can be
applied.  However, the local entanglement structure relating to the
non-trivial periodicity is lost in this procedure, and the physical
properties of the system can change radically (e.g., the antiferromagnet
becomes a ferromagnet when blocking $2$ sites). It is therefore desirable
to have a standard form, together with a fundamental theorem, which are
directly applicable to all translational invariant MPS, including periodic
ones.

In this work, we introduce a new standard form, the \emph{irreducible
form}, which is applicable to all tensors generating translationally
invariant MPS, including $m$-periodic states. We show how to transform an
arbitrary tensor $A$ generating a family of translationally invariant MPS into 
a tensor $B$ in irreducible form which generates the same family of states,
$\mc{V}(A)=\mc{V}(B)$, without the need for blocking.  We then derive a fundamental theorem for MPS in
irreducible form: Given any $A$ and $B$ in irreducible form which generate
the same family of states, $\mc{V}(A)=\mc{V}(B)$, there exist a unitary
matrix $Z$ and an invertible matrix $Y$ such that $ZA^i = YB^i Y^{-1}$ for
all $i$, with $[Z,A^i]=0$ and $\mc{V}(A)=\mc{V}(ZA)$, this is, $A$ and $B$
are related by a local gauge transformation.
 
We moreover provide some applications of this result. The first one is in the
context of renormalization of MPS, and relates the possibility of refining
MPS with the divisibility properties of a trace preserving completely
positive map (i.e.\ a quantum channel)  that is associated to the
corresponding tensor. The second one is in the context of symmetries of
MPS and extends the results of previous works to MPS which are periodic.

The rest of the paper is organized as follows. 
In Section \ref{sec:irr} we introduce the irreducible form and show how to transform any tensor $A$ to its irreducible form. We also study how the irreducible form behaves under blocking.
In Section \ref{sec:thm} we present the fundamental theorem for MPS in irreducible form. 
In Section \ref{sec:appl} we show some applications of our result, and  
in Section \ref{sec:concl} we conclude. The proof of a key technical lemma is postponed to \cref{app}.

\section{The irreducible form for MPS}
\label{sec:irr}

In this section we introduce the irreducible form and show its basic properties (\cref{ssec:irr}), and then study how it behaves under blocking (\cref{ssec:block}).

Let us first fix some notation. 
We are given a tensor $A = \left\{A^i\in \mathcal{M}_D\right\}_{i=1,\ldots, d}$, where $\mathcal{M}_D$ denotes the set of $D\times D$ complex matrices, and $D$ is called the bond dimension of $A$. 
This tensor defines the family of translationally invariant MPS of Eq.\ \eqref{eq:VA}. 
As in this paper we will only deal with translationally invariant MPS, we will often simply call them MPS. 
  Given  $A$, 
 the completely positive (CP) map $\E_{A}$ associated to it is defined as  $\E_{A}(\cdot) = \sum_{i=1}^d A^i\: \cdot\: A^{i \dagger}$, and its dual map 
as $\E^*_{A}(\cdot) = \sum_{i=1}^d  A^{i\dagger} \: \cdot\: A^i$.

\subsection{The irreducible form}
\label{ssec:irr}

The procedure that transforms $A$ to its irreducible form is  the same as the one that transforms it to its canonical form \cite{Ci15}, 
with the only difference that we do not block sites together. 
That is, we   project $A$ onto its invariant subspaces (as explained, e.g., at the beginning of  Sec.\ 2.3.\ of Ref.\ \cite{Ci15}), which allows to express $A$ in a block diagonal form 
$$
A^i=\bigoplus_{j\in\tilde{J}} {\tilde{A}}_j^i ,
$$ 
so that the CP map associated to each block $\E_{\tilde{A}_j}:\mathcal{M}_{D_j}\rightarrow \mathcal{M}_{D_j}$ is {\it irreducible} \cite{Wo11}, in the sense that
there exists no non-trivial hermitian projector $P$ so that  $\E_{\tilde{A}_j}(P\mathcal{M}_{D_j}P)\subseteq P\mathcal{M}_{D_j}P$.
Denoting  the spectral radius of $\E_{\tilde{A}_j}$ by $\varrho_j$ (recall that $\varrho_j>0$),  and defining $ {A}_j^i :=\tilde{A}^i_j/\sqrt{\varrho_j}$, 
we are left with 
\be 
\label{eq:irreducible-form}
A^i = \bigoplus_{j\in \tilde{J}} \mu_j  A_j^i ,
\ee
where $\mu_j := \sqrt{\varrho_j} >0 $, and 
every  $\E_{A_j}$ is an irreducible CP map of spectral radius 1. This is known to be equivalent to having 1 as a non-degenerate eigenvalue with an associated strictly positive eigenvector for both $\E_{A_j}$ and  $\E_{A_j}^*$ \cite{Wo11}. There could however be other eigenvalues of modulus one. If this is not the case the map is called \emph{primitive}, and the corresponding block or tensor is called {\it normal} \cite{Ci15}. We shall refer to the set of eigenvalues of magnitude one of a map as its peripheral spectrum, following \cite{Wo11}.  

If there are other eigenvalues of modulus one, one can show (see e.g. \cite{Wo11}) that they must be exactly the $m_j$-roots of unity for some $m_j$, each of them with multiplicity 1. 
This is why we shall call the blocks $A_j$ in the decomposition of Eq.~\eqref{eq:irreducible-form} {\it periodic}, and $m_j$ its \emph{period} or \emph{periodicity}. Note that normal tensors are just periodic tensors with period equal to $1$. 

\begin{definition}[Irreducible form]
A tensor $A$ is in irreducible form if it is in form \eqref{eq:irreducible-form} with $\mu_j>0$, and every $\E_{A_j}$ is an irreducible CP map of  spectral radius $1$.
\end{definition}

We have just shown the following. 

\begin{proposition}\label{thm:irr}
Given any tensor $A$, 
one can always find another tensor $B$ in irreducible form such that $\mc{V}(A) = \mc{V}(B)$. 
\end{proposition}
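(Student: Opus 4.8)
The plan is to make rigorous the construction already sketched in the paragraphs preceding \cref{eq:irreducible-form}, showing each step preserves the family of states $\mc{V}(\cdot)$. There are really only two operations to justify: (i) restricting to invariant subspaces so as to reach a block-diagonal form in which each block is irreducible, and (ii) rescaling each block by a positive constant.

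\medskip

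\emph{Step 1: Reduction to a direct sum of irreducible blocks.} First I would recall the standard fact (used in the construction of the canonical form in Ref.~\cite{Ci15}) that if $P$ is a nontrivial hermitian projector with $\E_A(P\mathcal{M}_D P)\subseteq P\mathcal{M}_D P$, then replacing $A^i$ by its compression $PA^iP$ does not change the MPS: indeed, any tensor can be brought, by a similarity transform $A^i\mapsto YA^iY^{-1}$ (which manifestly leaves every $\tr(A^{i_1}\cdots A^{i_N})$ invariant, hence leaves $\mc{V}(A)$ invariant), into a block-upper-triangular form whose diagonal blocks correspond to the ``sinks'' of the associated CP map, and the off-diagonal and transient blocks contribute nothing to $\tr(A^{i_1}\cdots A^{i_N})$ for any $N$. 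Iterating this until no diagonal block admits a further nontrivial invariant projector, one obtains $A^i \simeq \bigoplus_{j\in\tilde J}\tilde A_j^i$ with $\mc{V}(A)=\mc{V}(\bigoplus_j \tilde A_j)$ and each $\E_{\tilde A_j}$ irreducible in the stated sense. I would cite Ref.~\cite{Ci15} (beginning of Sec.~2.3) and Ref.~\cite{Wo11} for the details of this decomposition rather than reproving it.

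\medskip

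\emph{Step 2: Normalising the spectral radii.} For each block set $A_j^i := \tilde A_j^i/\sqrt{\varrho_j}$ where $\varrho_j>0$ is the spectral radius of $\E_{\tilde A_j}$ (positivity of $\varrho_j$ is the Perron--Frobenius statement for irreducible CP maps, cf.~\cite{Wo11}). Since $\E_{A_j}(\cdot)=\varrho_j^{-1}\E_{\tilde A_j}(\cdot)$, the map $\E_{A_j}$ is again irreducible and now has spectral radius $1$. Writing $\mu_j:=\sqrt{\varrho_j}$ we get precisely the form \eqref{eq:irreducible-form} with $B^i := \bigoplus_j \mu_j A_j^i = \bigoplus_j \tilde A_j^i \simeq A^i$, so $B$ is in irreducible form (by the Definition) and $\mc{V}(B)=\mc{V}(A)$. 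The only thing to check here is the trivial identity $\mu_j^N(\tilde A_j^i/\sqrt{\varrho_j})\cdots = \tilde A_j^{i}\cdots$ inside each trace, i.e.\ that the rescaling is a pure bookkeeping device — nothing changes in $\mc{V}$.

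\medskip

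The main (and really only) obstacle is that Step~1 is not a literal ``projection'' of the tensor but requires the observation that transient and off-diagonal blocks of the block-triangular form are invisible to all periodic traces $\tr(A^{i_1}\cdots A^{i_N})$; this is exactly the content of the canonical-form construction of Ref.~\cite{Ci15}, which we invoke. Since the statement explicitly says ``We have just shown the following,'' I expect the intended proof to be a one- or two-sentence pointer to that construction plus the remark that we simply omit the blocking step; I would write it at that level of detail, emphasising only that every operation used (similarity transforms, discarding non-contributing blocks, positive rescaling of blocks) leaves $\mc{V}(A)$ unchanged.
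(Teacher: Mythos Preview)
Your proposal is correct and follows essentially the same approach as the paper: the paper's ``proof'' is literally the preceding two paragraphs (projection onto invariant subspaces to obtain a direct sum of irreducible blocks, then normalising each block by its spectral radius), and your write-up is a slightly more explicit version of exactly that construction, correctly identifying that the only nontrivial point is the invariance of $\mc{V}(A)$ under the block-triangular reduction, which is deferred to Ref.~\cite{Ci15}.
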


We have also seen that if $A$ has bond dimension $D$, then $B$ has bond dimension $\tilde{D}\leq D$. Moreover, if all blocks in~\eqref{eq:irreducible-form} are normal \cite{Ci15} (i.e.\ have period $1$), then \eqref{eq:irreducible-form} is just the canonical form of $A$ according to the definitions in \cite{Ci15}.

The next step is to group blocks that are {\it essentially the same} in the following sense.

\begin{definition}[Repeated  blocks] \label{def:repeated}
We say that two blocks, say $A_j$ and $A_k$, 
are \emph{repeated} if there exist a phase $\xi$ and an invertible matrix  $Y$ so that  
\be
A_j^i=e^{i\xi}YA_k^iY^{-1} .
\label{eq:rep} 
\ee
We say that they are \emph{equivalent} if \eqref{eq:rep} holds with $\xi=0$.
\end{definition}

Clearly, a block of periodicity $m$ cannot be repeated with one of
periodicity $n\ne m$. Taking any minimal set $J\subseteq \tilde{J}$ of
non-repeated blocks, there is a permutation of the blocks in
\eqref{eq:irreducible-form} that allows to rewrite $A$ as 
\be
A^i = 
\bigoplus_{j\in J} 
\left( R_j \otimes A^i_j\right) ,
\label{eq:bdnr}
\ee
where $R_j:= \diag(\mu_{j,1},\ldots, \mu_{j,r_j})$, with $\mu_{j,l}\in
\mathbb{C}\setminus\{0\}$. We shall henceforth refer to $\{\mu_{j,l}\}_{l}$ as the
multiplicities of block~$A_j$.

\begin{definition}[Basis of periodic tensors]
Any set $\{A_j\}_{j\in J}$ of non-repeated periodic tensors that allows to rewrite (up to a similarity transformation) a tensor $A$ in irreducible form as in \eqref{eq:bdnr} is called a \emph{basis of periodic tensors for $A$}. 
Two bases of periodic tensors are called \emph{equivalent} if they have the same number of elements, and each element of one basis is equivalent to an element of the other basis. Note that the prefactors of the basis elements can always be absorbed in the multiplicities. 
\end{definition}

If all blocks are normal in \eqref{eq:irreducible-form}, then the set $\{A_j\}_{j\in J}$ is called  instead a basis of normal tensors \cite{Ci15}.

\

As in the case of the canonical form, we can also impose conditions on the fixed points of the maps $\E_{A_j}$. That is,  let  $\rho_j$ denote the unique (positive definite) fixed point of  $\E^*_{A_j}$. 
Defining 
\be
A'^i_j := \sqrt{\rho_j} \: A^i_j \: \frac{1}{\sqrt{\rho_j}}\quad   \label{eq:unital}
\ee
we have that $\E^*_{A'_j}(\openone_{D_j})=\openone_{D_j}$, i.e.\ it is unital. 
(Note that this is equivalent to stating that $\E_{A_j}$ is trace-preserving.)
Since Eq.\ \eqref{eq:unital} is a similarity transform, $\E^*_{A'_j}$ is still irreducible and with spectral radius 1
(Proposition 6.6. of \cite{Wo11}), and so is $\E_{A'_j}$. 
Now let $\sigma_j$ denote the fixed point of $\E_{A'_j}$, which is again
positive definite, $\sigma_j>0$, and has spectral decomposition $\sigma_j = U_j \Lambda_j U_j^\dagger$. 
Defining 
$$
A''^i_j = U_j^\dagger \: A'^i_j \:U_j , 
$$
we have that $\E_{A''_j}(\Lambda_j) = \Lambda_j$ where $\Lambda_j$ is diagonal and positive definite.  
 Moreover $\E^*_{A''_j}$ is still unital. 
 
 That is, without loss of generality we can take every block in \eqref{eq:irreducible-form} (and \eqref{eq:bdnr}) with the additional property that the CP maps $\E_{A_j}$ are trace preserving and have a diagonal positive matrix as a unique fixed point. If this is the case, by analogy with \cite{Ci15} we say that $A$ is in \emph{irreducible form II}. 
 Note that if two blocks with associated trace-preserving CP maps are repeated, then the invertible matrix $Y$ in \cref{def:repeated} must be unitary. Note also that one can freely change from irreducible form to irreducible form II just by a block diagonal similarity transformation. Hence one can prove the results here in either form and they will immediately apply for the other one, just by replacing  appropriately ``invertible map'' by ``unitary map''.

Finally, $m_j$-periodic tensors $A_j$ are known to have \cite{Wo11} the following off-diagonal structure 
\be
A^i_j = \sum_{u=0}^{m_j-1} P_{j,u} \: A^i_j \: P_{j,u+1}, 
\qquad \textrm{with }\:\: P_{j,u} P_{j,u'}=\delta_{u,u'}P_{j,u}, \quad \sum_{u=0}^{m_j-1}P_{j,u}=\openone_{D_j}, \quad P_{j,m_j}= P_{j,0}.
\label{eq:Aoffdiag}
\ee
If the tensors are in irreducible form II, the projectors $P_{j,u}$ are hermitian, 
$\E^*_{A_j}(P_{j,u})=P_{j, u+1}$ and hence the unitary $U_j= \sum_{u=0}^{m_j-1} \omega_j^u P_{j,u}$ verifies that 
\be\label{eq:peripheral}
\E^*_{A_j}(U_j^l)=\omega_j^l U_j^l,\quad 
\textrm{with }\:\: \omega_j=e^{i\frac{2\pi }{m_j}}, 
\quad \textrm{for }\:\: l=0,\dots, m_j-1. 
\ee 
From \eqref{eq:Aoffdiag} it follows that 
$$
|V_N(A_j)\ra =
\begin{cases}
\sum_{u=0}^{m_j-1} \sum_{i_1,\ldots, i_N}  \tr(P_{j,u} \: A^{i_1}_j \cdots A^{i_N}_j)|i_1,\ldots, i_N\ra & \textrm{if } m_j \mid N\label{eq:VAj}\\
0 & \textrm{else} .
\end{cases}
$$
That is, if $m_j$ divides $N$, then $|V_N(A_j)\ra$ is a sum of $m_j$ $m_j$-periodic terms, and otherwise the state is 0. 
Thus, $|V_N(A_j)\ra$ is translationally invariant in a non-trivial way, since the translation operator by one site generates a cyclic permutation of the $m_j$ terms.


\subsection{Blocking periodic tensors}
\label{ssec:block}

Before proceeding, it is important to analyse how periodic tensors, and hence the irreducible form, behave under the blocking of tensors. 
Since blocking tensors corresponds to taking powers of the associated CP maps, the question is equivalent to studying how irreducible CP maps behave under  powers. For that, the off-diagonal decomposition \eqref{eq:Aoffdiag} plays a key role. 
 In this direction, the following result is proven in \cite{Ca13b}, 
 but we shall include here a proof  for completeness. 
 We will denote the tensor $A$ after blocking $p$ sites by $A^{(p)}$, i.e. 
$$
A^{(p)}=\left\{ A^{i_1}A^{i_2}\cdots A^{i_p}  \right\}_{i_1,\ldots, i_p\in \{1,\ldots d\}}\; .
$$
We will also denote by $\mathbf{i}$ the multiindex that contains $(i_1,\ldots, i_p)$.

 \begin{lemma}[Blocking a single periodic block]
\label{lem:bdcf}
Let $A$ be in irreducible form II with a single periodic block of
periodicity $m$.  If $p$ is a multiple of $m$, then $A^{(p)}$ is in
canonical form II with a basis of normal tensors given by $\{C_u=P_u
A^{(p)}\}_{u=0}^{m-1}$ (with $P_u$ as in \eqref{eq:Aoffdiag}, but without
the block index $j$).
\end{lemma}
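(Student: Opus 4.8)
The plan is to combine two facts: blocking $p$ sites is the same as taking the $p$-th power of the transfer map, $\E_{A^{(p)}}=\E_A^{\,p}$; and the single periodic block has the off-diagonal structure~\eqref{eq:Aoffdiag}. Write $P_0,\dots,P_{m-1}$ for the projectors there (dropping the block index), $D_u$ for the dimension of $P_u\mathbb{C}^D$, and $\Lambda>0$ for the diagonal fixed point of $\E_A$ guaranteed by irreducible form~II. First I would iterate~\eqref{eq:Aoffdiag} to get $A^{i_1}\cdots A^{i_q}=\sum_u P_uA^{i_1}\cdots A^{i_q}P_{u+q}$ for every $q$ (indices mod $m$); for $q=p$ with $m\mid p$ this reads $A^{\mathbf i}=\bigoplus_u P_uA^{\mathbf i}P_u$, so $A^{(p)}=\bigoplus_{u=0}^{m-1}C_u$ with $C_u=P_uA^{(p)}$, the blocks living on the orthogonal subspaces $P_u\mathbb{C}^D$. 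Because the $A^{\mathbf i}$ are block diagonal, $\E_{A^{(p)}}$ sends block-diagonal matrices to block-diagonal matrices (there equal to $\bigoplus_u\E_{C_u}$), and more generally maps each ``offset'' subspace $\mathcal{O}_k=\bigoplus_w P_w\mathcal{M}_D P_{w+k}$ to itself, acting position by position since the cyclic shift induced by the $P$'s is by $p\equiv0$; likewise $\E^*_{A^{(p)}}=(\E_A^*)^p$ is unital, and restricting $\openone=\bigoplus_u\openone_{D_u}$ to the blocks shows each $\E_{C_u}$ is trace preserving.

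The heart of the argument is a spectral count. I would first note that $\E_A$ itself maps $\mathcal{O}_k$ into $\mathcal{O}_k$ (it shifts the position $(w,w')\mapsto(w-1,w'-1)$), and that since the eigenvalue $1$ of $\E_A$ is simple, its fixed point $\Lambda$ is supported in a single $\mathcal{O}_k$; as $\Lambda>0$ has non-vanishing diagonal blocks, that must be $\mathcal{O}_0$, so $[\Lambda,P_u]=0$ and $\Lambda=\bigoplus_u\Lambda_u$ with $\Lambda_u=P_u\Lambda P_u>0$. The matrices $P_0\Lambda,\dots,P_{m-1}\Lambda$ (linearly independent since $\Lambda$ is invertible) span an $\E_A$-invariant subspace of $\mathcal{O}_0$ on which $\E_A$ acts, via $\E_A(P_u\Lambda)=P_{u-1}\Lambda$, as the cyclic shift, hence with the $m$ distinct $m$-th roots of unity as simple eigenvalues. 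As $A$ has a single $m$-periodic block, these are \emph{all} of the peripheral spectrum of $\E_A$, so $\E_A|_{\mathcal{O}_k}$ has spectral radius $<1$ for $k\neq0$. Passing to $\E_A^{\,p}$ ($m\mid p$): on $\mathcal{O}_0$ the map $\bigoplus_u\E_{C_u}$ has the eigenvalue $1$ with total multiplicity exactly $m$ and no other peripheral eigenvalue; since each (trace-preserving) $\E_{C_u}$ carries the eigenvalue $1$ with multiplicity $\geq1$ and there are $m$ of them, each $\E_{C_u}$ has $1$ as a simple eigenvalue and $\{1\}$ as its entire peripheral spectrum, while $\E_A^{\,p}(\Lambda)=\Lambda$ forces $\E_{C_u}(\Lambda_u)=\Lambda_u$. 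A positive-definite fixed point plus a simple peripheral eigenvalue $1$ makes $\E_{C_u}$ primitive, i.e.\ $C_u$ normal; together with $\E^*_{C_u}$ unital and $\Lambda_u$ diagonal in an appropriate basis of the block ($[\Lambda,P_u]=0$ and $\Lambda$ diagonal), $A^{(p)}$ is in canonical form~II.

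It remains to show $\{C_u\}_{u=0}^{m-1}$ is a \emph{basis}, i.e.\ the $C_u$ are pairwise non-repeated. For $u\neq v$ the mixed transfer map $X\mapsto\sum_{\mathbf i}C_v^{\mathbf i}\,X\,C_u^{\mathbf i\dagger}$ is one of the position-wise blocks of $\E_A^{\,p}|_{\mathcal{O}_{u-v}}$, which has spectral radius $<1$ by the count above; hence $\langle V_N(C_u)\,|\,V_N(C_v)\rangle=\tr[(\text{mixed map})^N]\to0$. But if $C_u^{\mathbf i}=e^{i\xi}YC_v^{\mathbf i}Y^{-1}$ then $|V_N(C_u)\ra=e^{iN\xi}|V_N(C_v)\ra$, so $|\langle V_N(C_u)\,|\,V_N(C_v)\rangle|=\langle V_N(C_v)\,|\,V_N(C_v)\rangle=\tr[(\E_{C_v})^N]$, which tends to a strictly positive constant because $\E_{C_v}$ is primitive --- a contradiction. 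So the $C_u$ are non-repeated and form a basis of normal tensors for $A^{(p)}$.

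The step I expect to be the main obstacle is the spectral bookkeeping of the middle paragraph: establishing that the $m$ peripheral eigenvalues of $\E_A$ all sit inside $\mathcal{O}_0$ and exhaust it. This one fact does double duty --- it makes the off-diagonal subspaces $\mathcal{O}_k$ ($k\neq0$) subcritical, which is needed both for the primitivity of the $C_u$ and for the asymptotic orthogonality used in the last step, and it pins the multiplicity of the eigenvalue $1$ of $\E_A^{\,p}$ at exactly one per block. A minor technical point is checking that $\Lambda_u$ is literally diagonal rather than merely diagonalizable, which is where $[\Lambda,P_u]=0$ enters.
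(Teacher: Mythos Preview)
Your argument is correct and, for the primitivity of the $C_u$, follows essentially the paper's route: both identify the peripheral spectrum of $\E_A$ as the $m$ simple $m$-th roots of unity, conclude that $\E_A^{\,p}$ has $1$ as its only peripheral eigenvalue with multiplicity exactly $m$, and distribute one copy to each block $\E_{C_u}$. Your framing via the invariant subspaces $\mathcal{O}_k$ makes explicit what the paper leaves implicit, and directly yields the subcriticality of $\E_A|_{\mathcal{O}_k}$ for $k\neq0$; the step you flagged as the main obstacle is sound.

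For non-repeatedness the paper takes a more direct path: assuming $C_u^{\mathbf i}=e^{i\xi}UC_v^{\mathbf i}U^\dagger$ with $U=P_uUP_v$, it computes $\E_A^{\,m}(U)=e^{i\xi}U$, exhibiting a peripheral eigenvector of $\E_A^{\,m}$ supported in $P_u\mathcal{M}_DP_v\subset\mathcal{O}_{v-u}$, which contradicts precisely the spectral localisation you established. Your route via $\langle V_N(C_u)\,|\,V_N(C_v)\rangle\to0$ is equivalent in content (that inner product is the trace of powers of the very same off-diagonal block of $\E_A^{\,p}$) and has the mild advantage of not needing the gauge $Y$ to be unitary.

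One small caveat on the ``form II'' label: $[\Lambda,P_u]=0$ together with $\Lambda$ diagonal does \emph{not} force $P_u$ (and hence $\Lambda_u=P_u\Lambda$) to be diagonal when $\Lambda$ has repeated entries; in general one needs a further unitary rotation within the eigenspaces of $\Lambda$ to simultaneously diagonalise the commuting family $\{P_u\}$. The paper glosses over this as well, and it affects neither primitivity nor non-repeatedness.
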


Note that for convenience we choose not to project out the zero blocks in
$C_u$, this is, $C_u$ consists of the actual normal tensor, supported on
the range of $P_u$, padded with zeros. We will stick to this convention in
the following. In addition, we will also assume gauge transformations
acting between different $C_u$ and $C_v$ to be only defined on the
respective supports, and being zero outside.

\begin{proof}
Without loss of generality we can assume that $p=m$. We first want to see that  $\{C_u\}_{u=0}^{m-1}$ forms a basis of normal tensors. 
We will first show that each $C_u$ is a normal tensor, 
which amounts to seeing that $\E_{C_u}$ is primitive. 
Note first that since $A$ is a periodic block, $\E_A$ is irreducible with peripheral spectrum  $\{\omega^{r}\}_{r=0}^{m-1}$, where $\omega = e^{i2\pi/m}$.  
The CP map associated to $A^{(m)}$ is $\E_A^m$ (which denotes  the $m$-fold application of the map $\E_A$),
and it can be expressed as 
$$
\E_A^m(\rho) = \sum_{u,u'=0}^{m-1} P_u \E_A^m ( P_u\: \rho \: P_{u'}) P_{u'} .
$$
This map has $1$ (with multiplicity $m$) as a unique eigenvalue of magnitude one. 
Denote the fixed point of  $\E_{A}$ by $\Lambda_A$. 
Then it is immediate to see  that the set of fixed points of  $\E_A^m$ is given by $\{P_u\Lambda_A P_u\}_u$, the set of fixed points of $\E_A^{* m}$ is given by $\{P_u\}_u$, 
that  $P_u\Lambda_A P_u$ and $P_u$ are the fixed points of  $\E_{C_u}$ and $\E_{C_u}^*$,  respectively, and that $\E_{C_u}$ does not have any other eigenvalue of magnitude $1$. Therefore $\E_{C_u}$ is primitive, and  $C_u$ is a normal tensor in canonical form II, for all $u$. 

It only remains to be seen that the $C_u$'s are non-repeated. 
So imagine that there were a $C_u$ and a $C_v$ related to each other by
\be
C_u^{\mathbf{i}} = e^{i\xi} U C_v^{\mathbf{i}} U^\dagger . \label{eq:Cu}
\ee
As mentioned above, we choose $U$ such that it is only
non-zero on the respective supports, $U=P_uUP_v$, where it acts like a
unitary (i.e.\ $UU^\dagger=P_u$ and $U^\dagger U=P_v$). 
We have that 
$$
\E_A^m(U) = \sum_{\mathbf{i}} C_u^\mathbf{i}   U  C_v^{\mathbf{i} \dagger} = e^{i\xi}U  \sum_{\mathbf{i}} C_v^{\mathbf{i}}C_v^{\mathbf{i}\: \dagger}. 
$$
Noting that  $\E_{C_v}(\openone) =  \E_{C_v}(P_v) = P_v$ we obtain that $\E_A^m(U) = e^{i\xi}U$. \
But we established above that $\{P_u\Lambda_A P_u\}_{u=1}^m$ are the only   fixed points of $\E_A^m$, and that $\E_A^m$ has no other eigenvalues of modulus 1. 
Hence relation \eqref{eq:Cu} cannot hold.
\end{proof}

As a consequence, if a tensor $A$ is in irreducible form (irreducible form
II) and $m={\rm lcm}(\{m_j\}_{j\in J})$, the tensor $A^{(m)}$ is in
canonical form (canonical form II).

\cref{lem:bdcf} can be easily generalised to an arbitrary $p$. 

\begin{lemma}
\label{lem:blocking-arbitrary}
Let $A$ be in irreducible form II with a single periodic block of periodicity $m$. 
Take $p\in \mathbb{N}$ and let $r={\rm gcd}(m,p)$. Then $A^{(p)}$ is in irreducible form II, with $r$ periodic tensors of periodicity $m/r$, given by 
$C_\alpha= \tilde{P}_\alpha A^{(p)}$, $\alpha=0,\ldots, r-1$, where
$$
\tilde{P}_\alpha=\sum_{k=0}^{m/r-1} P_{[\alpha + pk]_m}, 
$$
where the notation $[s]_m$ stands for $(s\mod m)$.
\end{lemma}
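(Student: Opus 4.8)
The plan is to reduce this to \cref{lem:bdcf} by working with a well-chosen power. First I would set $r=\gcd(m,p)$ and choose $q\in\mathbb{N}$ such that $pq$ is a multiple of $m$; the natural choice is $pq=\mathrm{lcm}(m,p)=mp/r$, so that $q=m/r$. The key observation is that $A^{(pq)}=(A^{(p)})^{(q)}$, and since $pq$ is a multiple of $m$, \cref{lem:bdcf} tells us exactly what $(A^{(p)})^{(q)}$ looks like: it is in canonical form II, with a basis of $m$ non-repeated normal tensors $\{C_u=P_u A^{(pq)}\}_{u=0}^{m-1}$. The strategy is then to read off the structure of $A^{(p)}$ from the structure of its $q$-th power.

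The main steps are as follows. Using the off-diagonal decomposition \eqref{eq:Aoffdiag} of the single periodic block $A$, observe that the action of $A^{(p)}$ carries the range of $P_u$ to the range of $P_{[u+p]_m}$ (one application of $A^{(p)}$ corresponds to $p$ applications of the single-site map, shifting the index by $p$ modulo $m$). Hence the index set $\{0,1,\ldots,m-1\}$ is partitioned by $A^{(p)}$ into orbits under the shift $u\mapsto [u+p]_m$; there are exactly $r=\gcd(m,p)$ such orbits, each of size $m/r$, and the orbit of the residue class $\alpha\in\{0,\ldots,r-1\}$ is $\{[\alpha+pk]_m: k=0,\ldots,m/r-1\}$. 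The projector onto the span of this orbit is precisely $\tilde P_\alpha=\sum_{k=0}^{m/r-1}P_{[\alpha+pk]_m}$, and these are mutually orthogonal hermitian projectors summing to $\openone_{D}$, each invariant (as a subspace) under the "cyclic with period $m/r$" action of $A^{(p)}$. This already shows $A^{(p)}=\bigoplus_{\alpha=0}^{r-1}C_\alpha$ with $C_\alpha=\tilde P_\alpha A^{(p)}$ an off-diagonal tensor whose associated map $\E_{C_\alpha}$ has peripheral spectrum contained in the $(m/r)$-th roots of unity, i.e.\ $C_\alpha$ is (at most) $m/r$-periodic.

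It remains to verify that each $\E_{C_\alpha}$ is irreducible, that its period is exactly $m/r$ (not a proper divisor), and that $A^{(p)}$ is in irreducible form II (trace-preserving, diagonal positive fixed point). For irreducibility and the exact period, I would invoke \cref{lem:bdcf} applied to $C_\alpha$ itself: blocking $C_\alpha$ by $q=m/r$ sites gives $C_\alpha^{(m/r)}$, which is a sub-collection of the blocks of $A^{(pq)}=A^{(\mathrm{lcm}(m,p))}$; by \cref{lem:bdcf} these blocks are primitive (normal) and, being $m/r$ in number and pairwise non-repeated, they force $\E_{C_\alpha}$ to be irreducible with peripheral spectrum exactly the full set of $(m/r)$-th roots of unity — if $\E_{C_\alpha}$ were reducible or had smaller period, its $(m/r)$-th power could not split into $m/r$ distinct primitive blocks. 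The trace-preservation and fixed-point conditions for irreducible form II are inherited directly from those of $A$: since $\E^*_A$ is unital and blocking preserves unitality of the dual map, $\E^*_{A^{(p)}}$ is unital, and restricting to each $\tilde P_\alpha$-block (which is $\E^*_A$-invariant) preserves this; the unique fixed point of $\E_{C_\alpha}$ is $\tilde P_\alpha \Lambda_A \tilde P_\alpha$, which is diagonal and positive on the support. I expect the main obstacle to be the bookkeeping around the orbit structure — carefully matching the $r$ orbits of size $m/r$ to the projectors $\tilde P_\alpha$ and checking that the off-diagonal cyclic structure \eqref{eq:Aoffdiag} of $C_\alpha$ comes out with period exactly $m/r$ — rather than any deep new idea, since all the analytic content is already in \cref{lem:bdcf}.
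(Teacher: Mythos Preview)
Your proposal is correct, and the orbit decomposition you describe (steps 3--4) is exactly what the paper does, packaged there as \cref{lem:unique-dec}. The point of departure is how you establish that each $C_\alpha$ is irreducible with period exactly $m/r$. You go through a further blocking: set $q=m/r$, note that $C_\alpha^{(q)}=\tilde P_\alpha A^{(pq)}$ is a direct sum of $m/r$ of the primitive blocks $P_u A^{(pq)}$ supplied by \cref{lem:bdcf}, and argue that this forces $\E_{C_\alpha}$ to be irreducible of the right period. The paper instead argues directly at the level of $A^{(p)}$: it observes that the peripheral spectrum of $\E_A^p$ consists of the $(m/r)$-th roots of unity, each with multiplicity $r$, and then uses the single computation $\E^*_{C_\alpha}(P_{[\alpha+pk]_m})=P_{[\alpha+p(k+1)]_m}$ (immediate from \eqref{eq:peripheral}) to exhibit all $(m/r)$-th roots of unity as peripheral eigenvalues of each $\E_{C_\alpha}$; a count then pins down the spectrum and hence the period. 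Your route is slightly longer but has the advantage of reusing \cref{lem:bdcf} as a black box, so nothing new needs to be proved about peripheral spectra; the paper's route is shorter but relies on the reader accepting that the spectral count implies irreducibility. Either way the ``irreducible form II'' conditions (unitality of $\E^*_{C_\alpha}$, diagonal positive fixed point $\tilde P_\alpha\Lambda_A\tilde P_\alpha$) are inherited exactly as you say.
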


Though we will not need it here, the tensors $C_\alpha$ are non-repeated. The argument is essentially the same as in the previous lemma. We just need the following trivial observation. 

\begin{lemma}\label{lem:unique-dec}
For each $u\in\{0,\ldots, m-1\}$ there exists a unique $\alpha_u\in\{0,\ldots, r-1\}$ and a unique $k_u\in\left\{0,\ldots, \left(\frac{m}{r}-1\right)\right\}$ so that
$u=[\alpha_u+ pk_u]_m$. 
\end{lemma}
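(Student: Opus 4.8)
The plan is to prove \cref{lem:unique-dec} by a direct counting argument using the structure $u = [\alpha + pk]_m$ with $\alpha \in \{0,\ldots,r-1\}$ and $k \in \{0,\ldots,m/r-1\}$, where $r = \gcd(m,p)$. First I would fix an arbitrary $u \in \{0,\ldots,m-1\}$ and observe that since $r \mid m$ and $r \mid p$, reducing the defining equation modulo $r$ gives $u \equiv \alpha \pmod r$; because $\alpha$ ranges over a complete residue system modulo $r$, this forces $\alpha_u := u \bmod r$, proving uniqueness of $\alpha$. With $\alpha_u$ fixed, it remains to show that the congruence $pk \equiv u - \alpha_u \pmod m$ has exactly one solution $k$ in the range $\{0,\ldots,m/r-1\}$.

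For that step I would use the standard fact that the linear congruence $pk \equiv c \pmod m$ is solvable iff $\gcd(m,p) = r$ divides $c$, and in that case has exactly $r$ solutions modulo $m$, which are spaced $m/r$ apart; equivalently, dividing through by $r$, the congruence $(p/r)k \equiv c/r \pmod{m/r}$ has a unique solution modulo $m/r$ since $\gcd(p/r, m/r) = 1$. Here $c = u - \alpha_u$ is divisible by $r$ by the choice of $\alpha_u$, so there is exactly one $k_u \in \{0,\ldots,m/r-1\}$ with $[\alpha_u + pk_u]_m = u$. This establishes both existence and uniqueness simultaneously.

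Alternatively, and perhaps more cleanly for a self-contained write-up, I would argue by a cardinality/pigeonhole count: the map $(\alpha,k) \mapsto [\alpha + pk]_m$ has domain of size $r \cdot (m/r) = m$ and codomain $\{0,\ldots,m-1\}$ of size $m$, so it suffices to show it is injective, whence it is automatically a bijection and every $u$ has a unique preimage. Injectivity follows because if $[\alpha + pk]_m = [\alpha' + pk']_m$, then reducing mod $r$ gives $\alpha = \alpha'$ (as above), and then $p(k - k') \equiv 0 \pmod m$, i.e. $(p/r)(k-k') \equiv 0 \pmod{m/r}$; since $\gcd(p/r,m/r)=1$ this gives $k \equiv k' \pmod{m/r}$, and with both $k,k'$ in $\{0,\ldots,m/r-1\}$ we conclude $k = k'$.

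I do not expect any real obstacle here — this is elementary modular arithmetic. The only point requiring mild care is keeping track of which divisibility facts about $r = \gcd(m,p)$ are being invoked (namely $r \mid m$, $r \mid p$, and $\gcd(m/r, p/r) = 1$), and being explicit that $\{0,\ldots,r-1\}$ and $\{0,\ldots,m/r-1\}$ are complete residue systems modulo $r$ and $m/r$ respectively. The cardinality argument is the one I would actually write, since it makes "unique" immediate from "injective" without separately arguing existence.
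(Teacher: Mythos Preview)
Your proposal is correct; both the direct congruence argument and the cardinality/injectivity argument are valid and complete. The paper itself offers no proof whatsoever, introducing the lemma only as a ``trivial observation'' and immediately moving on to its consequences, so your write-up simply supplies the elementary modular arithmetic that the paper leaves implicit.
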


This implies that in \cref{lem:blocking-arbitrary}, $\tilde{P}_\alpha \tilde{P}_{\beta}=\delta_{\alpha,\beta}\tilde{P}_\alpha$ and $\sum_\alpha \tilde{P}_\alpha=\1_{\mathcal{M}_D}$. Moreover, by their definition, $C_\alpha=\tilde{P}_\alpha A^{(p)}=\tilde{P}_\alpha A^{(p)}\tilde{P}_\alpha$ for all $\alpha$. Hence, each of the $C_\alpha$ defines a new block. We are thus only left to show that each of them is periodic of periodicity $m/r$. Since we know that the peripheral spectrum of $\E_{A}^p$ is $\{\omega^l\}_{l=0}^{\frac{m}{r}-1}$, with $\omega=e^{i\frac{2\pi}{m}}$, and each eigenvalue has multiplicity $r$, to finish the proof of \cref{lem:blocking-arbitrary}, it is enough to see that the peripheral spectrum of each $\E_{C_\alpha}$ is $\{\omega^l\}_{l=0}^{\frac{m}{r}-1}$. 
This, however, follws immediately using Eq.\ \eqref{eq:peripheral} since
$$
\E^*_{C_\alpha}(P_{[\alpha + pk]_m})=P_{[\alpha + p(k+1)]_m}\; .
$$

\section{The fundamental theorem for MPS}
\label{sec:thm}

In this section we present the fundamental theorem for MPS in irreducible form. 
In \cref{ssec:prelim} we will present some preliminary results, 
in \cref{ssec:thmprop} we will present the theorem in  the proportional case (i.e.\ when $|V_N(A)\ra$ is proportional to $|V_N(B)\ra$ for all $N$), and in 
\cref{ssec:thmequal} the theorem in the equal case  (i.e.\ when $\mc{V}(A)=\mc{V}(B)$). 

\subsection{Preliminary results}
\label{ssec:prelim}

We start by recalling the fundamental theorem for MPS in canonical form in the proportional case and in the equal case.

\begin{theorem}[Theorem 2.10.\ of \cite{Ci15}]
\label{thm:cf}
Let $A$ and $B$ be two tensors in canonical form with basis of normal tensors 
$\left\{A_j^i\right\}_{j\in J}$ and 
$\left\{B_k^i\right\}_{k\in K}$. 
If 
$|V_N(A)\ra$ and  $|V_N(B)\ra$ are proportional to each other for all $N$,  
then $|J|=|K|$, 
and for each $j$ there exists a $k$, a phase $\xi_k$, and an invertible  matrix $Y_k$ such that 
$A_j^i = e^{i\xi_k}Y_k B_k Y_k^{-1}.$  
That is, any two tensors in canonical form giving proportional MPS for all $N$ have equivalent bases of normal tensors.
\end{theorem}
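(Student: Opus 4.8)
The plan is to reduce everything to the case of a single normal block, settle that case by combining transfer operators with the Skolem--Noether theorem, and then recombine the blocks.

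\emph{Transfer operators and a dichotomy.} For $X,Y\in\{A,B\}$ let $\E_{XY}(M):=\sum_i X^iM(Y^i)^\dagger$ (a map on matrices of the appropriate, possibly rectangular, size), so that $\langle V_N(Y)|V_N(X)\rangle=\trace(\E_{XY}^N)$. If $C$ is a normal block, Perron--Frobenius for primitive CP maps gives that $1$ is the only peripheral eigenvalue of $\E_{CC}$ and that it is simple, so $\trace(\E_{CC}^N)\to1$. More generally, for normal blocks $C,C'$ a short computation with the $\E_{C'C'}$-fixed point shows that $\E_{CC'}$ has an eigenvalue of modulus one — the single phase $e^{i\xi}$, $\xi$ the repetition phase — exactly when $C$ and $C'$ are \emph{repeated}; otherwise $\E_{CC'}$ has spectral radius $<1$ and $\trace(\E_{CC'}^N)\to0$. (The nontrivial implication, that a peripheral eigenvector of $\E_{CC'}$ must be full rank and hence forces $C\sim e^{i\xi}C'$, is again Perron--Frobenius for irreducible CP maps, and is also contained in the single-block theorem below.)

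\emph{The single-block case.} Let $A,B$ be single normal blocks, so $\trace(A^{\mathbf i})=\lambda_N\trace(B^{\mathbf i})$ for all $\mathbf i$ of length $N$, with $\lambda_N\ne0$. Choose $L$ large enough that $A^{(L-1)},A^{(L)}$ and $B^{(L-1)},B^{(L)}$ are all \emph{injective}, i.e.\ $\{A^{\mathbf i}\}_{|\mathbf i|=L}$ spans $\mathcal{M}_{D_A}$, etc.\ (a standard property of primitive tensors), and set $\phi_A\colon(\mathbb{C}^d)^{\otimes L}\to\mathcal{M}_{D_A}$, $|\mathbf i\rangle\mapsto A^{\mathbf i}$. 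The length-$2L$ identities and non-degeneracy of $(X,Y)\mapsto\trace(XY)$ give $\ker\phi_B\subseteq\ker\phi_A$, so $B^{\mathbf i}\mapsto A^{\mathbf i}$ extends to a linear $\Psi\colon\mathcal{M}_{D_B}\to\mathcal{M}_{D_A}$, which is bijective by symmetry (hence $D_A=D_B=:D$); the length-$3L$ identities give $\Psi(B^{\mathbf i}B^{\mathbf j})=(\lambda_{2L}/\lambda_{3L})A^{\mathbf i}A^{\mathbf j}$, so a fixed scalar multiple of $\Psi$ is a multiplicative bijection of $\mathcal{M}_D$, automatically unital, hence inner by Skolem--Noether: $A^{\mathbf i}=c\,YB^{\mathbf i}Y^{-1}$ for all $|\mathbf i|=L$, with $Y$ invertible and $c\ne0$. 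To unblock, put $\tilde B^i:=YB^iY^{-1}$, so $A^{\mathbf i}=c\,\tilde B^{\mathbf i}$ on length $L$; splitting a length-$(L{+}1)$ string into a first letter plus its last $L$, or its first $L$ plus a last letter, yields (cancel $c$) $A^{i_0}\tilde B^{\mathbf k}\tilde B^{i_L}=\tilde B^{i_0}\tilde B^{\mathbf k}A^{i_L}$ for all $i_0,i_L$ and $|\mathbf k|=L-1$; injectivity at length $L-1$ promotes this to $A^{i_0}M\tilde B^{i_L}=\tilde B^{i_0}MA^{i_L}$ for all $M$, i.e.\ $A^{i_0}\otimes(\tilde B^{i_L})^T=\tilde B^{i_0}\otimes(A^{i_L})^T$, whence $\tilde B^i=\alpha A^i$ for one scalar $\alpha$ by uniqueness of tensor factorisation. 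Thus $A^i=\alpha^{-1}YB^iY^{-1}$, and $|\alpha|=1$ because $A,B$ are both normalised to spectral radius $1$; so $A^i=e^{i\xi}YB^iY^{-1}$ for all $i$.

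\emph{Recombination.} For the general statement write $A=\bigoplus_{j\in J}A_j$, $B=\bigoplus_{k\in K}B_k$ in canonical form (pairwise non-repeated normal blocks, multiplicity weights $w_j(N),v_k(N)$), so $\sum_j w_j(N)|V_N(A_j)\rangle=c_N\sum_k v_k(N)|V_N(B_k)\rangle$ for all $N$. By the dichotomy the Gram matrix of $\{|V_N(A_j)\rangle\}_j$ tends to $\openone$, so for large $N$ these vectors are independent (and likewise for $B$), while $\langle V_N(B_k)|V_N(A_j)\rangle\to0$ unless $A_j$ and $B_k$ are repeated. Pairing the displayed identity with each $|V_N(B_k)\rangle$, and symmetrically with each $|V_N(A_j)\rangle$, and letting $N\to\infty$ (after normalising the weights) forces every $A_j$ to be repeated with some $B_k$ and every $B_k$ with some $A_j$; since the blocks within $A$, and within $B$, are pairwise non-repeated, this is a bijection $J\leftrightarrow K$. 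Hence $|J|=|K|$, and for a matched pair the statement ``$A_j$ is repeated with $B_k$'' is exactly $A_j^i=e^{i\xi_k}Y_kB_k^iY_k^{-1}$ — the claim.

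\emph{Main obstacle.} The heart of the proof is the single-block case, and within it the assertion that the intertwiner is invertible, equivalently that $\Psi$ is well defined and bijective. This is exactly where primitivity is essential — via the injectivity of a sufficiently blocked normal tensor, a Perron--Frobenius-type input — and it is also what pins down $D_A=D_B$. The unblocking step is short but genuinely uses normality: without it the statement fails, e.g.\ $A^1=\diag(1,-1)$ and $B^1=\openone$ satisfy $(A^1)^2=(B^1)^2$ while no phase relates $A^1$ and $B^1$. The recombination is routine in spirit but needs care with the multiplicity weights and the precise asymptotic orthogonality of the $|V_N(A_j)\rangle$.
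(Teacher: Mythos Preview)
The paper does not prove this theorem itself --- it is quoted from \cite{Ci15} --- but the intended argument is visible in the paper's proof of the periodic generalisation (\cref{thm:bd}), which the authors say ``mimics'' the proof in \cite{Ci15}: one invokes the equal-or-orthogonal dichotomy for normal tensors (\cref{equal-or-orthogonal}) together with the near-orthogonality $\Rightarrow$ linear independence lemma (\cref{Lem1t}), and then argues that each $B_k$ must have non-vanishing overlap with some $A_j$, hence be repeated with it, giving $|K|\le|J|$ and by symmetry $|J|=|K|$. Your recombination paragraph is exactly this argument, and your first paragraph sketches precisely the Perron--Frobenius route to \cref{equal-or-orthogonal}. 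So at the level of overall architecture your proof coincides with the paper's.

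Where you genuinely diverge is the \emph{single-block case}. The paper (via \cref{equal-or-orthogonal}) extracts the gauge $Y$ as a peripheral eigenvector of the mixed transfer operator $\E_{AB}$ and uses Perron--Frobenius to show it is full rank. You instead exploit injectivity of a sufficiently blocked normal tensor to build a linear bijection $\Psi:\mathcal M_{D_B}\to\mathcal M_{D_A}$, verify multiplicativity from the length-$3L$ trace identities, and then invoke Skolem--Noether to realise $\Psi$ as conjugation; the unblocking step via $A^{i_0}M\tilde B^{i_L}=\tilde B^{i_0}MA^{i_L}$ is clean. This is a legitimately different, algebra-theoretic route that avoids any positivity or spectral-gap input beyond injectivity; it also makes the equality $D_A=D_B$ transparent (as $\dim\mathcal M_{D_A}=\dim\mathcal M_{D_B}$). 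The trade-off is that your hypothesis there is ``proportional for all $N$'', which is \emph{stronger} than ``overlap does not tend to $0$''. Consequently the Skolem--Noether argument does not by itself establish the dichotomy you use in the recombination step; you still rely on the Perron--Frobenius sketch from your first paragraph for that. So as written, paragraph~2 is an elegant alternative for the single-block theorem but is logically redundant in your overall proof. If you want the Skolem--Noether route to carry the whole argument, you would need to upgrade the recombination so that it delivers, for each matched pair $(A_j,B_k)$, an honest proportionality $|V_N(A_j)\rangle\propto|V_N(B_k)\rangle$ (not merely non-vanishing overlap) before invoking paragraph~2.
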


\begin{theorem}[Corollary 2.11.\ of \cite{Ci15}] \label{cor:cfequal}
Let $A$ and $B$ be two tensors in canonical form so that $\mc{V}(A) = \mc{V}(B)$. 
Then $A$ and $B$ have equivalent bases of normal tensors with exactly the same multiplicities. In particular, there is an invertible $Y$ such that 
$A^i = YB^i Y^{-1}$.
\end{theorem}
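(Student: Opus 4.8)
\emph{Proof idea.} The plan is to bootstrap from the proportional case, \cref{thm:cf}, while additionally keeping track of the scalar prefactors contributed by the multiplicities. Write the two canonical forms as $A^i=\bigoplus_{j\in J}(R_j\otimes A^i_j)$ and $B^i=\bigoplus_{k\in K}(S_k\otimes B^i_k)$, with $R_j=\diag(\mu_{j,1},\dots,\mu_{j,r_j})$ and $S_k=\diag(\nu_{k,1},\dots,\nu_{k,s_k})$. Since the trace of a product of $N$ block-diagonal matrices factorises over the blocks and over the two tensor factors,
\begin{equation}
|V_N(A)\ra=\sum_{j\in J}c_j(N)\,|V_N(A_j)\ra,\qquad c_j(N):=\tr(R_j^N)=\sum_{l=1}^{r_j}\mu_{j,l}^{\,N},
\label{eq:prop-decomp}
\end{equation}
and similarly $|V_N(B)\ra=\sum_{k\in K}d_k(N)\,|V_N(B_k)\ra$ with $d_k(N)=\sum_l\nu_{k,l}^{\,N}$.

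First I would invoke \cref{thm:cf}: since $\mc{V}(A)=\mc{V}(B)$ the vectors $|V_N(A)\ra$ and $|V_N(B)\ra$ are in particular proportional for every $N$, so $|J|=|K|$ and there are a map $j\mapsto k(j)$, phases $\xi_j$, and invertible matrices $Y_j$ with $A^i_j=e^{i\xi_j}Y_jB^i_{k(j)}Y_j^{-1}$. This map is injective, hence bijective: if $k(j)=k(j')$ then $A_j$ and $A_{j'}$ would be repeated, contradicting minimality of the basis for $A$. After relabelling the blocks of $B$ I may assume $k(j)=j$, and then absorb the phases into the basis of $B$: replacing $B^i_j$ by $e^{i\xi_j}B^i_j$ leaves $\E_{B_j}$ unchanged (so the blocks stay normal and pairwise non-repeated) and replaces the multiplicities $\nu_{j,l}$ by $e^{-i\xi_j}\nu_{j,l}$, giving again a basis of normal tensors for $B$. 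So without loss of generality $A^i_j=Y_jB^i_jY_j^{-1}$ for all $j$, whence $|V_N(A_j)\ra=|V_N(B_j)\ra$ for all $j$ and $N$ by cyclicity of the trace.

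Substituting into \eqref{eq:prop-decomp} and using $|V_N(A)\ra=|V_N(B)\ra$ gives $\sum_{j\in J}\bigl(c_j(N)-d_j(N)\bigr)|V_N(A_j)\ra=0$ for all $N$. The crucial input, which is the linear-independence statement underlying the proof of \cref{thm:cf} in \cite{Ci15}, is that the MPS $\{|V_N(A_j)\ra\}_{j\in J}$ of pairwise non-repeated normal tensors are linearly independent for all sufficiently large $N$: computing the Gram matrix $\la V_N(A_j)|V_N(A_k)\ra=\tr\bigl[(\sum_i\overline{A^i_j}\otimes A^i_k)^N\bigr]$, the mixed transfer operator $\sum_i\overline{A^i_j}\otimes A^i_k$ has spectral radius $1$ with a strictly positive leading eigenvector for $j=k$ (so its diagonal entries tend to a positive constant) and spectral radius $<1$ for $j\ne k$ (exactly because $A_j,A_k$ are not repeated), so the normalised Gram matrix tends to the identity. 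Hence $c_j(N)=d_j(N)$ for all large $N$; as both are finite sums of exponentials, equality then holds for all $N\ge1$, so $\sum_l\mu_{j,l}^{\,N}=\sum_l\nu_{j,l}^{\,N}$ for every $N\ge1$. Writing each side as $\sum_\mu m_\mu\,\mu^N$ over the distinct nonzero values $\mu$ that occur and using the linear independence of the sequences $(\mu^N)_{N\ge1}$ (a Vandermonde argument), the multiplicities $m_\mu$ agree, i.e.\ $\{\mu_{j,l}\}_l=\{\nu_{j,l}\}_l$ as multisets (with the absorbed phases understood). This is precisely the claim that $A$ and $B$ have equivalent bases of normal tensors with the same multiplicities, and then $Y:=\bigoplus_{j\in J}(\1_{r_j}\otimes Y_j)$ — composed with the block permutation from the relabelling and the reordering of the $\mu$'s — satisfies $A^i=YB^iY^{-1}$.

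I expect the linear-independence step to be the main obstacle: it is not stated explicitly above and carries all the analytic weight (spectral radii and positivity of leading eigenvectors of mixed transfer operators), whereas the reduction via \cref{thm:cf}, the phase absorption, and the exponential-sum bookkeeping are routine. A minor point to dispatch along the way is that $|V_N(A_j)\ra$ may vanish for small $N$ even when $A_j$ is normal (e.g.\ when all $A^i_j$ are traceless), which is why the independence is claimed — and needed — only for large $N$.
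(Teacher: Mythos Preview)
Your proposal is correct and follows essentially the same approach the paper uses. Note that the paper does not reprove \cref{cor:cfequal} (it is quoted from \cite{Ci15}), but its proof of the generalisation \cref{thm:bdequal} specialises, for period $1$, to exactly your argument: invoke the proportional case to identify the bases up to phases, absorb the phases into the multiplicities, use linear independence of the $\{|V_N(A_j)\rangle\}_j$ for large $N$ (in the paper via \cref{equal-or-orthogonal} and \cref{Lem1t}, which is precisely your Gram-matrix sketch), and then a moment identity to match multiplicities (the paper's \cref{lem:momentsN0} is your Vandermonde step).
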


This fundamental theorem is a direct consequence of the fact that normal
tensors are essentially {\it equal or orthogonal} in the following sense. 

\begin{proposition}[Lemma  A.2.\ of \cite{Ci15}] \label{equal-or-orthogonal}
Let $A_1$ and $A_2$ denote two normal tensors with bond dimensions $D_1, D_2$, and generating MPS $|V_N(A_1)\ra, |V_N(A_2)\ra$, respectively. 
Then
 \begin{subequations}
 \begin{align}
&& \lim_{N\to\infty} \la V_N(A_i) |V_N(A_i)\rangle &=1,\\
 \label{scprod}
&& \lim_{N\to\infty} |\langle V_N(A_1)|V_N(A_2)\ra |&=0 \text{ or } 1.
 \end{align}
 \end{subequations}
In the latter case, $D_1=D_2$ and there exists an invertible matrix $Y$ and a phase $\phi$ so that $A_1^i= e^{i\phi} Y  A_2^i  Y^{-1}$.
\end{proposition}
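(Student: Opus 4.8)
The plan is to establish the statement in two parts: first the dichotomy for the limit of the overlap, then the rigidity conclusion in the ``$=1$'' case. For the first part, I would use the fact that for a normal tensor $A_i$ (primitive $\E_{A_i}$ of spectral radius $1$ in canonical form), the quantities $\la V_N(A_i)|V_N(A_i)\ra$ and $\la V_N(A_1)|V_N(A_2)\ra$ are both traces of powers of transfer-type operators. Explicitly, $\la V_N(A_1)|V_N(A_2)\ra = \tr(T^N)$, where $T = \sum_i \bar A_1^i \otimes A_2^i$ is the (mixed) transfer operator acting on $\mathcal{M}_{D_1,D_2}$, and similarly $\la V_N(A_i)|V_N(A_i)\ra = \tr(\E_{A_i}^N)$ where $\E_{A_i}$ is viewed as a matrix on $\mathcal{M}_{D_i}$. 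Since $\E_{A_i}$ is primitive with spectral radius $1$, its unique peripheral eigenvalue is $1$ (non-degenerate), so $\tr(\E_{A_i}^N)\to 1$, giving the first line. For the mixed operator $T$, the key input is that the spectral radius of $T$ is at most $1$ (this follows from Cauchy--Schwarz, or from the fact that $\|T\|$ in the appropriate norm is controlled by $\sqrt{\varrho(\E_{A_1})\varrho(\E_{A_2})}=1$), and hence $|\tr(T^N)|$ is bounded; taking a subsequence of $N$ along which $\tr(T^N)$ converges, the limit is either $0$ (if $T$ has no eigenvalue of modulus exactly $1$, or the peripheral part has trace summing to $0$) or has modulus related to the number and phases of peripheral eigenvalues. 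The cleanest route is to note $|\la V_N(A_1)|V_N(A_2)\ra|\le \sqrt{\la V_N(A_1)|V_N(A_1)\ra}\sqrt{\la V_N(A_2)|V_N(A_2)\ra}\to 1$, so $\limsup_N |\la V_N(A_1)|V_N(A_2)\ra|\le 1$, and then argue that if this $\limsup$ is positive it must actually equal $1$ and the limit exists.

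The main work, and the main obstacle, is showing the gap: if $\lim_N|\la V_N(A_1)|V_N(A_2)\ra|$ (along some subsequence, or all $N$) is bounded away from $0$, then it equals $1$ and forces $D_1=D_2$ with the stated gauge relation. I would argue as follows. Peel off the peripheral spectrum of $T$: write $T = T_{\mathrm{per}} + T_{<1}$ where $T_{\mathrm{per}}$ collects the (generalized) eigenspaces for eigenvalues of modulus $1$ and $T_{<1}$ has spectral radius $<1$. Because $T$ has spectral radius $\le 1$, no eigenvalue of modulus $1$ can have a nontrivial Jordan block (else $\|T^N\|$ would grow), so $T_{\mathrm{per}}$ is diagonalizable with unimodular eigenvalues $\lambda_1,\dots,\lambda_r$ and $\tr(T^N) = \sum_{a=1}^r \lambda_a^N + o(1)$. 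If $r\ge 1$ and the overlap does not go to $0$, I want to show $r=1$ and $\lambda_1$ is a root of unity with... actually I do not even need $\lambda_1$ to be a root of unity; I need that $|\tr(T^N)|\to 1$ forces $r=1$. Suppose $r\ge 2$: then $|\sum_a \lambda_a^N|$ oscillates and its $\limsup$ over $N$ strictly exceeds its $\liminf$ unless the $\lambda_a$ conspire, but in any case one shows (using that $|\la V_N|V_N\ra|\le 1$ up to $o(1)$ forces $|\sum_a\lambda_a^N|\le 1+o(1)$ for all $N$, while averaging $|\sum_a\lambda_a^N|^2$ over $N$ gives $\ge r$ by orthogonality of distinct characters) that $r\le 1$. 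Hence either $r=0$ (overlap $\to 0$) or $r=1$ with $|\lambda_1|=1$, giving $|\la V_N(A_1)|V_N(A_2)\ra|\to 1$.

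For the rigidity in the case $r=1$: the unique peripheral eigenvector of $T$ for eigenvalue $\lambda_1$ is a matrix $Y\in\mathcal{M}_{D_1,D_2}$ (i.e. $\sum_i \bar A_1^i Y^{?}$... more precisely, reshaping, $Y$ satisfies $\sum_i A_1^i Y A_2^{i\dagger} = \bar\lambda_1 Y$ or the analogous intertwining relation $A_1^i Y = \lambda_1 Y A_2^i$ summed appropriately — I would get the exact form by carefully tracking the vectorization). The standard argument is: the existence of such a $Y$ means the completely positive map $X\mapsto \sum_i A_1^i X A_2^{i\dagger}$ has $\lambda_1$ in its peripheral spectrum with eigenvector $Y$; by the structure theory of peripheral eigenvectors of CP maps (Wolf's notes, already cited in the excerpt as \cite{Wo11}), and using that both $\E_{A_1}, \E_{A_2}$ are primitive, one deduces $Y$ (or $YY^\dagger$, which must be a positive multiple of the fixed point of $\E_{A_1}$, and $Y^\dagger Y$ a positive multiple of the fixed point of $\E_{A_2}$) is invertible, forcing $D_1=D_2$, and then the intertwining relation $A_1^i Y = \lambda_1^{1/?} Y A_2^i$... — write $\lambda_1 = e^{i\phi'}$ and absorb phases to get $A_1^i = e^{i\phi} Y A_2^i Y^{-1}$ for a suitable phase $\phi$. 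I expect the delicate points to be (i) justifying that $\|T^N\|$ stays bounded so there are no Jordan blocks on the peripheral spectrum — this needs the a priori bound $\varrho(T)\le 1$ which I would get from Cauchy--Schwarz on $\tr(T^N)$ being dominated by $\sqrt{\tr(\E_{A_1}^N)\tr(\E_{A_2}^N)}$ together with a spectral-radius-from-traces argument, or more robustly from an operator-norm bound using a weighted inner product defined via the positive fixed points; and (ii) extracting invertibility of $Y$, which is where normality (primitivity) of both maps is essential, since for merely irreducible maps $Y$ could be supported on a proper block. I would cite \cite{Wo11} for the CP-map peripheral structure rather than reproving it, since that keeps the proof at the intended level of detail.
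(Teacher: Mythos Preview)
The paper does not actually prove this proposition: it is quoted verbatim as Lemma~A.2 of \cite{Ci15} and then \emph{used} (together with \cref{lem:bdcf}) in the Appendix to prove the generalized version, \cref{equal-or-orthogonal-generalized}. So there is no ``paper's own proof'' to compare against; the authors deliberately import the result from the cited reference.

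That said, your outline is the standard transfer-matrix argument and is essentially what one finds in \cite{Ci15}. A few remarks on the soft spots you flagged yourself. First, you do not need to rule out Jordan blocks for the dichotomy: $\tr(T^N)=\sum_a \lambda_a^N$ (algebraic multiplicities) regardless of Jordan structure, so the $o(1)$ splitting into peripheral and sub-unit parts and the averaging argument $\frac{1}{M}\sum_{N\le M}\bigl|\sum_{|\lambda_a|=1}\lambda_a^N\bigr|^2\to r$ go through directly, and $|\tr(T^N)|\le 1+o(1)$ forces $r\le 1$. Second, the genuinely delicate step is the rigidity when $r=1$: having a peripheral eigenvector $Y$ of the mixed operator $X\mapsto \sum_i A_1^i X A_2^{i\dagger}$ does not by itself give $A_1^i Y = e^{i\phi} Y A_2^i$ for each $i$; one needs the Cauchy--Schwarz \emph{equality} argument (in the weighted inner product built from the positive fixed points of $\E_{A_1}$ and $\E_{A_2}^*$) to conclude that the Kraus operators are aligned term-by-term, not just on average. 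Your sketch gestures at this via \cite{Wo11} but does not spell out the mechanism; if you were writing this up, that is the step that deserves an explicit line. The invertibility of $Y$ from primitivity (via $YY^\dagger$ and $Y^\dagger Y$ being fixed points, hence full-rank) is correctly identified.
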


We will also require the following trivial result, as well as some results about moments of numbers that are presented below.

\begin{lemma}\label{Lem1t}
\label{Lem1}
Given $g\in \N$, there exists $\epsilon>0$ such that for any $m\ge g$, any $g$ vectors $|w_1\ra,\ldots |w_g\ra\in \mathbb{R}^m$ fufilling 
\begin{enumerate}
\item $|\la w_i|w_j\ra|\le \epsilon$ if $i\not =j$, and 
\item $|\la w_i|w_i\ra| \ge 1-\epsilon$ 
\end{enumerate}
must be linearly independent.
\end{lemma}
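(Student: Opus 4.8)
The plan is to exploit the Gram matrix of the vectors and a standard
diagonal-dominance argument. First I would set $G$ to be the $g\times g$
real matrix with entries $G_{ij}=\la w_i|w_j\ra$. By hypothesis, each
diagonal entry satisfies $|G_{ii}|\ge 1-\epsilon$, while each off-diagonal
entry satisfies $|G_{ij}|\le\epsilon$. The $|w_i\ra$ are linearly independent
if and only if $G$ is invertible (this holds over any dimension $m$, which is
why the conclusion is uniform in $m\ge g$, and the role of $m\ge g$ is merely
to guarantee the vectors could in principle be independent). So it suffices
to choose $\epsilon$ depending only on $g$ so that $G$ is guaranteed to be
nonsingular.

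Next I would invoke the Gershgorin circle theorem: every eigenvalue of $G$
lies in some disc centered at $G_{ii}$ with radius $\sum_{j\ne i}|G_{ij}|\le
(g-1)\epsilon$. Since $|G_{ii}|\ge 1-\epsilon$, any such eigenvalue $\lambda$
satisfies $|\lambda|\ge |G_{ii}|-(g-1)\epsilon\ge 1-\epsilon-(g-1)\epsilon =
1-g\epsilon$. Choosing $\epsilon=\epsilon(g):=\tfrac{1}{2g}$ (or any value
strictly below $1/g$) forces $|\lambda|\ge 1/2>0$ for every eigenvalue, so
$G$ has trivial kernel and is therefore invertible. Consequently the
$|w_i\ra$ are linearly independent, as claimed.

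There is really no serious obstacle here; the only point that needs a
moment's care is the sign ambiguity in $G_{ii}$ — the hypothesis bounds
$|\la w_i|w_i\ra|$ rather than $\la w_i|w_i\ra$ — but since these are real
inner products of real vectors we have $\la w_i|w_i\ra\ge 0$, so in fact
$G_{ii}\ge 1-\epsilon$, and the Gershgorin bound on $|\lambda|$ goes through
verbatim. I would also note that one does not even need the full strength of
Gershgorin: directly, if $Gv=0$ with $v\ne 0$, pick the index $i$ maximizing
$|v_i|$, and the $i$-th equation gives $|G_{ii}||v_i|\le\sum_{j\ne
i}|G_{ij}||v_j|\le(g-1)\epsilon|v_i|$, i.e. $1-\epsilon\le(g-1)\epsilon$,
contradicting $\epsilon<1/g$. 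Either phrasing yields the lemma with the
explicit constant $\epsilon=1/(2g)$.
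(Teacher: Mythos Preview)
Your proof is correct. The paper itself does not give a proof of this lemma---it introduces it as a ``trivial result'' and states it without argument---so there is nothing to compare against; your Gram-matrix/diagonal-dominance approach (with the explicit choice $\epsilon<1/g$) is exactly the standard way to make the triviality precise.
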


\begin{lemma}\label{lem:moments}
Consider two sets of non-zero complex numbers 
$\{\mu_l\}_{l=1}^s$, $\{\nu_l\}_{l=1}^t$. 
If
$$
\sum_{l=1}^s \mu_l^N = 
\sum_{l=1}^t \nu_l^N , \quad 1 \leq N\leq \mathrm{max}(s,t)\ , 
$$
then $s=t$, and there is a permutation $\pi$ such that 
$\mu_l =  \nu_{\pi(l)}$ for all $l$.
\end{lemma}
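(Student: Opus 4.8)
\textbf{Proof plan for \cref{lem:moments}.}

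The plan is to reduce the statement to the uniqueness of coefficients and exponents in a finite exponential sum, which is a classical fact provable via a Vandermonde-type argument. First I would rewrite the hypothesis by collecting equal values: group the $\mu_l$ into distinct values $a_1,\dots,a_p$ with multiplicities $c_1,\dots,c_p$ (positive integers summing to $s$), and likewise group the $\nu_l$ into distinct values $b_1,\dots,b_q$ with multiplicities $d_1,\dots,d_q$ summing to $t$. Cancelling any value common to both sides (subtracting the smaller multiplicity), I may assume that $\{a_i\}$ and $\{b_j\}$ are disjoint sets of nonzero complex numbers. The hypothesis then becomes $\sum_i c_i a_i^N - \sum_j d_j b_j^N = 0$ for $1\le N\le \max(s,t)$, and the goal is to show that after cancellation nothing remains, i.e.\ $p=q=0$ in the reduced problem, which forces $s=t$ and the claimed permutation.

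The key step is the following: if $z_1,\dots,z_r$ are \emph{distinct} nonzero complex numbers and $e_1,\dots,e_r$ are complex scalars with $\sum_{k=1}^r e_k z_k^N = 0$ for $N=1,\dots,r$, then all $e_k=0$. This is because the linear system in the unknowns $e_k$ has coefficient matrix $(z_k^N)_{1\le N\le r,\,1\le k\le r}$, which equals $\mathrm{diag}(z_1,\dots,z_r)$ times the Vandermonde matrix $(z_k^{N-1})$; since the $z_k$ are distinct and nonzero, this matrix is invertible, so the only solution is $e_k=0$ for all $k$. I would apply this with $r = p+q$ and $(z_k) = (a_1,\dots,a_p,b_1,\dots,b_q)$ (these are distinct after the cancellation step), and $(e_k) = (c_1,\dots,c_p,-d_1,\dots,-d_q)$; the hypothesis gives the vanishing sums for $N=1,\dots,\max(s,t)$, and since $\max(s,t)\ge \max(p,q)$ but we actually need it $\ge p+q$ — here is the subtlety — I must check the range of $N$ suffices.

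The main obstacle, then, is the range of $N$: the hypothesis only guarantees the identity for $N\le \max(s,t)$, whereas the naive Vandermonde argument wants $N$ up to $p+q$, and $p+q$ could exceed $\max(s,t)$. I would handle this by \emph{not} combining the two sides into one sum of length $p+q$, but instead arguing by induction on $s+t$ (or on $\max(s,t)$), peeling off one term at a time. Concretely: assume WLOG $s\ge t$, so the identity holds for $1\le N\le s$. Suppose for contradiction the multisets $\{\mu_l\}$ and $\{\nu_l\}$ differ; after cancelling common values we have disjoint $\{a_i\}_{i=1}^p$, $\{b_j\}_{j=1}^q$ with $p\ge 1$ (since $s\ge t$ and the multisets differ, the $\mu$-side cannot be empty) and $p+q \le s + t \le 2s$. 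The cleanest route is: consider the power-sum identity as saying the two sequences $(\sum_i c_i a_i^N)_N$ and $(\sum_j d_j b_j^N)_N$ agree for $N=1,\dots,s$; each such sequence satisfies a linear recurrence whose characteristic polynomial has degree $p$ (resp.\ $q$), hence the difference sequence satisfies a linear recurrence of order $\le p+q$, but it also satisfies one of order $\le \max(p,q)+\min(p,q)=p+q \le s+t$. Rather than chase this, I would instead use the standard trick: build the $(p+q)\times(p+q)$ Vandermonde system but only using rows $N=1,\dots,p+q$; this requires $p+q\le s$, which need not hold. So the robust argument is induction: the smallest value among $\{a_i\}\cup\{b_j\}$ in modulus, say without loss $|a_1|$ is maximal — divide the identity $\sum c_i a_i^N = \sum d_j b_j^N$ by $a_1^N$ and let $N\to\infty$ along the (finitely many) $N\le s$... but $N$ is bounded, so asymptotics are unavailable. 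Given these pitfalls, I expect the author's proof to simply invoke that $\max(s,t)$ many equations in a nondegenerate generalized-Vandermonde system suffice because after cancellation $p+q\le s+t$ and one can symmetrize, or to cite it as a known fact about power sums (Newton's identities / uniqueness of exponential polynomials); I would present the Vandermonde nonvanishing as the heart of the matter and treat the bookkeeping on the index range as the routine-but-careful part.
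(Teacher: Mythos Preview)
The paper does not actually prove this lemma; it simply cites \cite{De15}. Your closing guess that the authors might ``cite it as a known fact about power sums'' is exactly what happens, so there is no in-paper proof to compare against.

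Regarding your own argument: the index-range issue you flag is not ``routine-but-careful bookkeeping'' --- it is a genuine gap. After cancelling common values you are left with $r=p+q$ distinct nonzero exponentials, and the Vandermonde system needs $r$ equations to force all coefficients to vanish, whereas the hypothesis supplies only $\max(s,t)$. A priori $r$ can be as large as $s+t$ (take the two lists disjoint), so the square Vandermonde step simply does not apply without further input. None of the workarounds you sketch are carried through: the induction is not set up with a usable inductive step, the recurrence idea is abandoned, and the asymptotic idea is, as you yourself note, unavailable because $N$ is bounded.

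The clean fix is precisely the Newton's-identities route you mention in passing but do not develop. Assume without loss of generality $s\ge t$ and pad the list $(\nu_l)$ with $s-t$ zeros. Since $0^N=0$ for $N\ge1$, the padded list has the same power sums $p_1,\dots,p_s$ as the original $\nu$-list, and by hypothesis these coincide with those of the $\mu$-list. Newton's identities over $\mathbb{C}$ recursively determine the elementary symmetric functions $e_1,\dots,e_s$ from $p_1,\dots,p_s$, hence
\[
\prod_{l=1}^{s}(x-\mu_l)\;=\;x^{\,s-t}\prod_{l=1}^{t}(x-\nu_l)\,.
\]
The left-hand side has no root at $0$ because every $\mu_l\ne0$, which forces $s=t$ and equality of the two multisets. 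This argument uses exactly $\max(s,t)$ power sums and no more, matching the hypothesis on the nose; it is what you should present instead of the Vandermonde attack.
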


The proof of this lemma can be found e.g.\ in \cite{De15}.

\begin{lemma}\label{lem:momentsN0}
Consider two sets of non-zero complex numbers 
$\{\mu_l\}_{l=1}^s$, $\{\nu_l\}_{l=1}^t$. 
If there exists an $N_0$ such that
\be
\sum_{l=1}^s \mu_l^N = \sum_{l=1}^t \nu_l^N , \quad \forall N\geq N_0  \ ,
\label{eq:hypmoments}
\ee
then $s=t$, and 
there is a permutation $\pi$ such that
 $\mu_l = \nu_{\pi(l)}$ for all $l$.
\end{lemma}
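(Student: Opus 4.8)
The plan is to reduce \cref{lem:momentsN0} to \cref{lem:moments} by exploiting the fact that an equality of exponential sums valid for all large $N$ is in fact as rigid as one valid for all $N$. Concretely, I would first collect the distinct values appearing among the $\mu_l$ and $\nu_l$: write the left-hand side as $\sum_{a} c_a \lambda_a^N$ and the right-hand side as $\sum_a c_a' \lambda_a^N$, where $\lambda_1,\dots,\lambda_q$ is the list of \emph{all} distinct complex numbers that occur in either set, and $c_a, c_a'$ are the (possibly zero) integer multiplicities with which $\lambda_a$ occurs in $\{\mu_l\}$ and $\{\nu_l\}$ respectively. Then \eqref{eq:hypmoments} says $\sum_a (c_a - c_a')\lambda_a^N = 0$ for all $N \geq N_0$.

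The key step is to show that this forces $c_a = c_a'$ for every $a$. For this I would invoke linear independence of the functions $N \mapsto \lambda_a^N$ over $\mathbb{C}$: since the $\lambda_a$ are distinct and nonzero, the generalized Vandermonde matrix $(\lambda_a^{N_0 + b})_{a,b=0,\dots,q-1}$ is invertible (it is $\mathrm{diag}(\lambda_a^{N_0})$ times a standard Vandermonde matrix in the distinct points $\lambda_a$), so the only way $\sum_a (c_a-c_a')\lambda_a^N=0$ can hold for $q$ consecutive values $N = N_0, N_0+1, \dots, N_0+q-1$ is if all coefficients vanish. Hence $c_a = c_a'$ for all $a$, which immediately gives $s = \sum_a c_a = \sum_a c_a' = t$ and the desired permutation $\pi$ matching each $\mu_l$ to an equal $\nu_{\pi(l)}$.

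Alternatively, and perhaps more in the spirit of the paper's toolkit, I could argue directly from \cref{lem:moments}: if \eqref{eq:hypmoments} holds for all $N \geq N_0$, multiply out and note that $\sum_l \mu_l^{N} = \sum_l \nu_l^{N}$ for $N$ in the window $N_0 \le N \le N_0 + \max(s,t) - 1$; then apply the Vandermonde/linear-independence argument above to these $\max(s,t)$ consecutive powers, which is exactly the ``window of consecutive exponents suffices'' strengthening of \cref{lem:moments}. Either route works; I would present the self-contained Vandermonde argument since it is short.

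I do not expect a genuine obstacle here — the statement is elementary once one recognizes it as a linear-independence-of-exponentials fact. The only point requiring a little care is bookkeeping: one must allow the multiplicities $c_a, c_a'$ to be zero (so that a value appearing only on one side is handled correctly), and one must be sure to use enough consecutive values of $N$ (namely $q$, the number of distinct values, which is at most $s + t$) to make the Vandermonde system square and invertible. Everything else is routine.
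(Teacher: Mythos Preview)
Your argument is correct, and it is genuinely different from the paper's proof. You collect the distinct values, invoke linear independence of the exponential functions $N\mapsto\lambda_a^N$ via an explicit Vandermonde factorisation, and read off equality of multiplicities directly. The paper instead reduces to \cref{lem:moments} by a more indirect trick: for each fixed $k\ge N_0$ it applies \cref{lem:moments} to the sequences $\{\mu_l^k\}$ and $\{\nu_l^k\}$ (using $N=k,2k,\dots,\max(s,t)k$), obtaining a $k$-dependent permutation $\pi_k$ with $\mu_l^k=\nu_{\pi_k(l)}^k$; it then pigeonholes over the finitely many permutations to find a single $\pi$ that works for two coprime exponents $k_1,k_2$, and concludes $\mu_l=\nu_{\pi(l)}$ via B\'ezout's identity. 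Your route is shorter, self-contained, and does not even need \cref{lem:moments} as a black box; the paper's route is more circuitous but has the minor virtue that the same pigeonhole-plus-B\'ezout pattern adapts verbatim when the hypothesis only holds along an arithmetic progression of exponents (as is implicitly needed in the application to \cref{thm:bdequal}), whereas your argument would require the small extra observation that one first passes to the $m_j$-th powers before invoking Vandermonde.
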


\begin{proof}
We fix an arbitrary  $k\geq N_0$ and consider Eq.\ \eqref{eq:hypmoments}
for $N=nk$ with $n=1,\ldots, \mathrm{max}(s,t)$. 
By Lemma \ref{lem:moments}, $s=t$, and there is  a permutation $\pi_{k}$ such
that 
$\mu^{k}_l = \nu^{k}_{\pi_{k}(l)}$ for all $l$.
Since the number of possible permutations is finite, 
there must exist one permutation $\pi$ so that for two 
$k_1,k_2\geq N_0$ with $\textrm{gcd}(k_1,k_2) = 1$, 
\be
\mu_l^{k_1}  =
\nu_{\pi(l)}^{k_1} ,\quad 
\mu_{l}^{k_2} =\nu_{\pi(l)}^{k_2}\quad \forall l.
\label{eq:alphabeta}
\ee
(For example we could consider all prime $k$, and consider the permutations associated to them.)
By Bezout's identity, there exist integer numbers $a,b\in \mathbb{Z}$
so that $a k_1 +bk_2 =1.$
From this and Eq.\ \eqref{eq:alphabeta} it follows that $\mu_l = \nu_{\pi(l)}$ for all~$l$. 
\end{proof}

\subsection{The fundamental theorem for MPS -- proportional case}
\label{ssec:thmprop}

In this section we present the fundamental theorem for MPS in irreducible form in the proportional case. Throughout this section and the following one, we will denote the irreducible form of a tensor $B$ by 
$$
B^i= \bigoplus_{k\in \tilde{K}} \nu_k B^i_k.
$$
Additionally, $\{B_k\}_{k\in K}$ will denote a basis of periodic tensors for $B$, leading to 
\be
B^i= \bigoplus_{k\in {K}} \left( S_k\otimes B^i_k\right),
\label{eq:Bbdnr}
\ee
where $S_k := \diag(\nu_{k,1},\ldots, \nu_{k,s_k})$ with  $\nu_{k,l}\in
\mathbb{C}\setminus\{0\}$.  As in the case of \cref{thm:cf},  the fundamental theorem of
MPS in irreducible form will be a consequence of the following
generalisation of  \cref{equal-or-orthogonal}. 

\begin{proposition}\label{equal-or-orthogonal-generalized}
Consider two periodic tensors, $A$ and $B$, with corresponding bond dimension $D_a, D_b$, periods $m_a,m_b$ and generating MPS $|V_N(A)\ra, |V_N(B)\ra$. 
Then
$$
 \lim_{N \to \infty }
 \: \langle V_N(A)|V_N(A)\rangle=m_a
$$ 
(where the limit is restricted to multiples of $m_a$),
 and similarly for $B$.  Moreover, either
$$
 \lim_{N\to\infty} |\langle V_N(A)|V_N(B)\rangle|=0 ,
$$
(again with the limit restricted to common multiples of $m_a$ and $m_b$)
or $D_a=D_b$ and there exist a phase $\phi$ and an invertible matrix $Y$
so that $A^i= e^{i\phi} Y  B^i  Y^{-1}$; the latter can only happen if 
$m_a=m_b$.  As a consequence of this and  \cref{Lem1t},
given a basis of periodic tensors $\{A_j\}_{j\in J}$ for $A$,  there
exists an $N_0$ so that for all $N\ge N_0$, the set of non-zero vectors in
$\{|V_N(A_j)\ra\}_{j\in J}$ is linearly independent and spans linearly
$|V_N(A)\ra$.
\end{proposition}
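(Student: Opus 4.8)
The plan is to reduce the statement about periodic tensors to the known results about normal tensors (\cref{equal-or-orthogonal}) by blocking, and then carefully undo the blocking. First I would block $m:=\mathrm{lcm}(m_a,m_b)$ sites. By \cref{lem:bdcf}, $A^{(m)}$ is in canonical form II with a basis of normal tensors $\{C_u = P_u A^{(m)}\}_{u=0}^{m_a-1}$, and similarly $B^{(m)}$ has basis $\{D_v = Q_v B^{(m)}\}_{v=0}^{m_b-1}$. Since $|V_N(A)\ra = |V_{N/m}(A^{(m)})\ra$ whenever $m\mid N$ (and $m_a\mid m$), and $|V_N(A_j)\ra$ is a sum of $m_a$ terms as in the displayed formula after \eqref{eq:peripheral}, I get $|V_N(A)\ra = \sum_{u=0}^{m_a-1} |V_{N/m}(C_u)\ra$. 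Now \cref{equal-or-orthogonal} applied to the normal tensors $C_u$ gives $\lim \la V_{N/m}(C_u)|V_{N/m}(C_{u'})\ra = \delta_{u,u'}$ (they are pairwise non-repeated by \cref{lem:bdcf}, hence in the orthogonal case), so $\lim_{N\to\infty}\la V_N(A)|V_N(A)\ra = m_a$ along multiples of $m$; one then checks this also holds along all multiples of $m_a$ by the same argument applied with $m$ replaced by any common multiple, or by noting $\E_A^{m_a}$ already has the required block structure. This handles the norm claim.

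For the overlap dichotomy, I would expand $\la V_N(A)|V_N(B)\ra = \sum_{u,v}\la V_{N/m}(C_u)|V_{N/m}(D_v)\ra$ along common multiples $N$ of $m$. By \cref{equal-or-orthogonal}, each term tends to $0$ or to a phase times $1$; moreover a nonzero limit forces $C_u$ and $D_v$ to be repeated, i.e.\ $C_u^{\mathbf i} = e^{i\theta} W D_v^{\mathbf i} W^{-1}$. If \emph{no} pair $(u,v)$ is repeated, then $\lim|\la V_N(A)|V_N(B)\ra| = 0$ and we are in the first alternative. Otherwise, suppose $C_0$ is repeated with some $D_v$. The key point is then to promote this blocked relation back to an \emph{unblocked} similarity $A^i = e^{i\phi}YB^iY^{-1}$. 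Here I would use the off-diagonal structure \eqref{eq:Aoffdiag}: the projectors $P_u$ are determined (via $\E^*_{A}(P_u)=P_{u+1}$) by $P_0$, and $P_0$ is read off from $C_0 = P_0 A^{(m)}$; similarly for the $Q_v$. One shows that if $C_0\cong D_v$ as normal tensors then $m_a = m_b$ (count: both equal $m/(\text{number of distinct }C_u)$, and the blocked tensor having the same canonical form forces equal numbers of blocks) and the single intertwiner $W$ on $\mathrm{range}(P_0)\to\mathrm{range}(Q_v)$ extends, using the $\E_A$-action cyclically permuting the blocks, to a full invertible $Y$ intertwining $A$ and (a phase times) $B$. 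This last extension argument — going from one block of the blocked canonical form back to a gauge on the original bond space — is the main obstacle, and the place where the $m$-periodic off-diagonal structure \eqref{eq:Aoffdiag} and \eqref{eq:peripheral} must be used carefully to fix the relative phases consistently around the cycle.

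Finally, for the linear-independence consequence: given a basis of periodic tensors $\{A_j\}_{j\in J}$ for $A$, the nonzero vectors among $\{|V_N(A_j)\ra\}_j$ (for $N$ a common multiple of the relevant periods) each have squared norm tending to $m_j \ge 1$, and by the just-proved dichotomy together with the fact that the $A_j$ are pairwise non-repeated, $\la V_N(A_j)|V_N(A_k)\ra \to 0$ for $j\ne k$. Normalising, for $N$ large these vectors satisfy the hypotheses of \cref{Lem1t} (with $g = |J|$ and $\epsilon$ as small as we like once $N$ is large), hence are linearly independent; this gives the uniform $N_0$. That they span $|V_N(A)\ra$ is immediate from \eqref{eq:bdnr}, which writes $|V_N(A)\ra = \sum_j \big(\sum_l \mu_{j,l}^N\big)\,|V_N(A_j)\ra$ (nonzero $|V_N(A_j)\ra$ only), so $|V_N(A)\ra$ lies in their span for every $N$.
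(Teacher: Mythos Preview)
Your high-level strategy matches the paper's: block, reduce to \cref{equal-or-orthogonal} via \cref{lem:bdcf}, then unblock. The norm computation and the linear-independence consequence are handled adequately (the paper gets the norm more directly from $\langle V_N(A)|V_N(A)\rangle=\tr(E_A^N)$ and the known peripheral spectrum, but your route is fine). Two steps, however, have genuine gaps. First, the argument that $m_a=m_b$ is garbled: ``both equal $m/(\text{number of distinct }C_u)$'' is wrong as written, since the number of $C_u$'s \emph{is} $m_a$, not $m/m_a$. The correct argument, which the paper carries out, uses translation at the level of states: if $C_0\cong D_v$ then applying the one-site shift to $|V_N(C_0)\rangle\propto|V_N(D_v)\rangle$ gives $|V_N(C_1)\rangle\propto|V_N(D_{v+1})\rangle$, hence $C_1\cong D_{v+1}$; iterating $m_a$ times yields $D_v\cong D_{v+m_a}$, and since the $D$'s are non-repeated (\cref{lem:bdcf}) this forces $m_b\mid m_a$, whence $m_a=m_b$ by symmetry.

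Second, and more seriously, ``extend $W$ using the $\E_A$-action cyclically permuting the blocks'' is not a workable plan for the unblocking step. The $\E_A^*$-action shifts the projectors $P_u$, but gives no canonical way to transport an intertwiner between the bond subspaces, and no way to pass from the blocked relation $P_uA^{(m)}=e^{i\lambda_v}U_v Q_vB^{(m)}U_v^{-1}$ (which you correctly obtain for every $u$ via translation) to a single-site relation. The paper's key technical idea, which your outline is missing, is \emph{injectivity}: each normal block $C_u=P_uA^{(m)}$ becomes injective at some length $N_0$, so it admits an inverse $\Omega_u$ as in Eq.~\eqref{eq:Omegauprop}. One then forms a long product $A_u^{i_1}F_{u+1}^{\mathbf i_1}A_{u+1}^{i_2}F_{u+2}^{\mathbf i_2}\cdots$ (with $F_u=C_u^{(N_0)}$ injective), reads it simultaneously as a product of $A$'s and, via the blocked relations, as a product of $B$'s, and contracts with the $\Omega$'s to cut it into a genuine tensor product $A_u^{i_1}\otimes\cdots\otimes A_{u+m-1}^{i_m}=e^{i\eta}\,B_v^{i_1}\otimes\cdots\otimes B_{v+m-1}^{i_m}$. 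From this one reads off $A_u^i=\kappa_v\, e^{i\eta/m}B_v^i$ with $\prod_v\kappa_v=1$; the unitality of irreducible form~II then forces $|\kappa_v|=1$, and these phases are absorbed cumulatively into the gauge to assemble a single $U$ with $A^i=e^{i\xi}UB^iU^\dagger$. This injectivity-and-inverse trick is the technical heart of the proof; you correctly flag the step as ``the main obstacle'' but do not supply the mechanism that resolves it.
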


Note that this last property is the one that justifies the name {\it basis of periodic vectors}. The proof of this result is postponed to \cref{app}.  We are now ready to present the main theorem in the proportional case. The proof mimics the one of Theorem \ref{thm:cf} given in \cite{Ci15}.

\begin{theorem}[Fundamental theorem for MPS -- proportional case]\label{thm:bd}
Let $A$ and $B$ be in irreducible form with basis of periodic tensors $\{A_j\}_{j\in J}$ and $\{B_k\}_{k\in K}$, respectively, and assume that for all $N$, 
$|V_N(A)\ra $ and $ |V_N(B)\ra$ are proportional to each other. 
Then for every $j\in J$ there is exactly one $k\in K$ (with the same period), a phase $\xi_k$, and 
 an invertible  matrix $Y_k$ such that 
$$
A_j^i = e^{i\xi_k} Y_k B_k^i Y_k^{-1}. 
$$
That is, any two tensors in irreducible form giving proportional MPS for
all $N$ have equivalent bases of periodic tensors.  
\end{theorem}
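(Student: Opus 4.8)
The plan is to mimic the proof of \cref{thm:cf} from \cite{Ci15}, replacing the input \cref{equal-or-orthogonal} by its generalisation \cref{equal-or-orthogonal-generalized}. First I would fix $N$ large enough (larger than the $N_0$ supplied by \cref{equal-or-orthogonal-generalized} applied to both $A$ and $B$, and also larger than $\max(|J|,|K|)$ once we restrict to multiples of $\mathrm{lcm}$ of all the periods appearing), so that $\{|V_N(A_j)\ra\}_{j\in J_N}$ and $\{|V_N(B_k)\ra\}_{k\in K_N}$ are linearly independent (where $J_N, K_N$ index the blocks whose period divides $N$; the others give the zero vector). Writing $|V_N(A)\ra = \sum_{j\in J_N} \big(\sum_l \mu_{j,l}^N\big) |V_N(A_j)\ra$ and similarly for $B$, the proportionality hypothesis $|V_N(A)\ra = c_N |V_N(B)\ra$ becomes an equation between two vectors expressed in (a priori different) linearly independent families.

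The key step is then the pairing argument. By \cref{equal-or-orthogonal-generalized}, for any $j,k$ the overlap $|\la V_N(A_j)|V_N(B_k)\ra|$ tends (along common multiples of the periods) either to $0$ or to a positive constant, the latter forcing $A_j$ and $B_k$ to be related by a phase and a similarity and to have equal periods. I would argue that each $A_j$ must be ``equal-up-to-phase'' to exactly one $B_k$: if some $A_j$ were orthogonal (in the limit) to every $B_k$, then taking the inner product of $|V_N(A)\ra = c_N|V_N(B)\ra$ with $|V_N(A_j)\ra$ and sending $N\to\infty$ would force the coefficient $\sum_l \mu_{j,l}^N$ of $|V_N(A_j)\ra$ to vanish for infinitely many $N$ — but since the $\mu_{j,l}$ are non-zero one can use \cref{lem:momentsN0}-type reasoning (or directly the fact that a non-trivial exponential sum cannot vanish along an arithmetic progression) to get a contradiction, since the limit of $\la V_N(A)|V_N(A_j)\ra$ equals $\sum_l \mu_{j,l}^N$ times the nonzero self-overlap $m_j$. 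Uniqueness of the partner $k$ follows because the $|V_N(B_k)\ra$ are mutually asymptotically orthogonal by the same proposition, so $A_j$ cannot be simultaneously non-orthogonal to two of them. This gives an injection $j\mapsto k(j)$; running the symmetric argument with the roles of $A$ and $B$ swapped (using $|V_N(B)\ra = c_N^{-1}|V_N(A)\ra$) gives an injection the other way, hence a bijection $J\leftrightarrow K$ with matched periods, and for each $j$ an invertible $Y_{k(j)}$ and phase $\xi_{k(j)}$ with $A_j^i = e^{i\xi_{k(j)}} Y_{k(j)} B_{k(j)}^i Y_{k(j)}^{-1}$, which is exactly the claimed equivalence of bases.

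The main obstacle I anticipate is bookkeeping the periodicity restriction cleanly: \cref{equal-or-orthogonal-generalized} only controls the overlaps along multiples of $m_a$, $m_b$ (or their common multiples), and a block $A_j$ contributes the zero vector to $|V_N(A_j)\ra$ whenever $m_j\nmid N$, so the linearly-independent spanning sets change with $N$. To handle this I would restrict throughout to $N$ ranging over multiples of $M:=\mathrm{lcm}\big(\{m_j\}_{j\in J}\cup\{m_k\}_{k\in K}\big)$, on which \emph{all} blocks are ``active'' simultaneously and all the relevant limits exist; the pairing argument then goes through verbatim, and since a block of period $m_j$ is repeated only with blocks of the same period (as already observed after \cref{def:repeated}), no information about periods is lost by this restriction. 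A secondary, but routine, point is to confirm that the limiting non-orthogonality statement in \cref{equal-or-orthogonal-generalized} indeed upgrades to an \emph{exact} relation $A_j^i = e^{i\xi}Y B_k^i Y^{-1}$ (not merely asymptotic) — but this is precisely what that proposition asserts, so it can be invoked directly.
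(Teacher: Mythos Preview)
Your proposal is correct and follows the same strategy as the paper: invoke \cref{equal-or-orthogonal-generalized} for the orthogonal/equal dichotomy, argue that no block can be asymptotically orthogonal to all blocks on the other side (the paper phrases this step via \cref{Lem1t}, which is precisely the linear-independence route you offer as an alternative to the inner-product argument), and then swap the roles of $A$ and $B$ to obtain the bijection $J\leftrightarrow K$. The paper's proof is much terser but structurally identical; your anticipated bookkeeping of restricting $N$ to multiples of the lcm of all periods is exactly the right way to make the limits simultaneous.
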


In the case of irreducible form II, both in  \cref{equal-or-orthogonal-generalized} and \cref{thm:bd} the invertible matrix must be unitary. 
Notice that this theorem only relates the bases of periodic tensors of $A$ and $B$, but it does not relate  $A$ and $B$ themselves, as the latter requires relating their multiplicities.  
This will only be possible in the equal case (\cref{thm:bdequal}).

\begin{proof}
Let us first consider $B_k$ for some given $k\in K$. It is not possible that $\langle V_N(B_k)|V_N(A_j)\rangle\to 0$ as $N\to\infty$ for all $j$, since otherwise the MPS generated by $A$ and $B$ could not be proportional for all $N$ (see \cref{Lem1}). Thus, according to  \cref{equal-or-orthogonal-generalized}, there must exist one $j_k\in J$ such that $B_k = e^{i\phi_k} Y_k A_{j_k} Y_k^{-1}$ for some phase $\phi_k$ and invertible matrix  $Y_k$. We thus conclude that $|K|\le |J|$.  But if we had considered $A_k$ to start with, we would obtain $|J|\le |K|$, so that $|J|=|K|$, and we are done.
\end{proof}

\subsection{The fundamental theorem for MPS -- equal case}
 \label{ssec:thmequal}

We are now ready to present the fundamental theorem for MPS in irreducible form in the equal case.

\begin{theorem}[Fundamental theorem for MPS -- equal case]\label{thm:bdequal}
Let $A$ and $B$ be in irreducible form.  If $\mc{V}(A) =  \mc{V}(B)$,
then $A$ and $B$ have equivalent bases of periodic tensors $\{A_j\}_{j\in
J}$ and $\{B_j\}_{j\in J}$. Moreover,  for each $j$, with periodicity
$m_j$, there exists a diagonal unitary matrix $Z_j$ with
$Z_j^{m_j}=\1_{r_j}$ 
(i.e.\ all diagonal entries are $m_j$-roots of unity), 
so that the diagonal matrices $R_j$ and $S_j$ associated to $A_j$ and $B_j$ in equations \eqref{eq:bdnr} and \eqref{eq:Bbdnr}, respectively, are related by $Z_jR_j=S_j$ after a reordering of the diagonal entries of $S_j$. 
Denoting $Z=\bigoplus_{j\in J} Z_j\otimes \1_{D_j}$, this implies that 
$$ 
Z A^i = YB^i Y^{-1}
$$ 
for all $i$, with $[A^i,Z]=0$ and $\mc{V}(A) = \mc{V}(ZA)$.
\end{theorem}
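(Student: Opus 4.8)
\emph{Step 1: reduce to a common basis of periodic tensors.} The strategy follows the derivation of \cref{cor:cfequal} from \cref{thm:cf} in \cite{Ci15}, now fed with the periodic analogue \cref{equal-or-orthogonal-generalized} and with \cref{lem:momentsN0} replacing the finite moment identity used there. Since $\mc V(A)=\mc V(B)$ means $|V_N(A)\ra=|V_N(B)\ra$ for all $N$, the two families are in particular proportional, so \cref{thm:bd} applies: it provides a bijection $J\leftrightarrow K$, which I use to set $K=J$, so that for each $j$ there are a phase $\xi_j$, an invertible $W_j$, and equal periods on the two sides, with $A_j^i=e^{i\xi_j}W_jB_j^iW_j^{-1}$. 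Replacing $B_j^i$ by $e^{i\xi_j}B_j^i$ and $S_j$ by $e^{-i\xi_j}S_j$ — which leaves $\E_{B_j}$, hence its irreducibility and spectral radius, untouched, and preserves non-repeatedness — turns $\{A_j\}_{j\in J}$ and $\{B_j\}_{j\in J}$ into genuinely \emph{equivalent} bases, which is the first assertion of the theorem. Conjugating $B$ by $Y_0:=\bigoplus_{j\in J}(\1_{s_j}\otimes W_j)$ yields $B':=Y_0BY_0^{-1}$ with $\mc V(B')=\mc V(B)=\mc V(A)$ and $B'^i=\bigoplus_{j\in J}(S_j\otimes A_j^i)$, whereas $A^i=\bigoplus_{j\in J}(R_j\otimes A_j^i)$; so from now on $A$ and $B'$ share the single basis $\{A_j\}_{j\in J}$.

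\emph{Step 2: match the multiplicities.} Expanding along the block structure gives
$$
|V_N(A)\ra=\sum_{j\in J}\tr(R_j^N)\,|V_N(A_j)\ra,\qquad
|V_N(B')\ra=\sum_{j\in J}\tr(S_j^N)\,|V_N(A_j)\ra,
$$
where $|V_N(A_j)\ra=0$ unless $m_j\mid N$. By the concluding statement of \cref{equal-or-orthogonal-generalized} there is an $N_0$ such that for all $N\ge N_0$ the non-zero members of $\{|V_N(A_j)\ra\}_{j\in J}$ are linearly independent; enlarging $N_0$ I may also assume $|V_N(A_j)\ra\ne 0$ whenever $m_j\mid N$ and $N\ge N_0$. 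Since $|V_N(A)\ra=|V_N(B')\ra$, comparing coefficients yields, for each $j$ and every $N\ge N_0$ divisible by $m_j$,
$$
\tr(R_j^N)=\tr(S_j^N),\qquad\text{i.e.}\qquad
\sum_{l=1}^{r_j}\big(\mu_{j,l}^{\,m_j}\big)^{M}=\sum_{l=1}^{s_j}\big(\nu_{j,l}^{\,m_j}\big)^{M}\quad(M\ge N_0/m_j).
$$
\cref{lem:momentsN0} then forces $r_j=s_j$ and, after a reordering of the diagonal entries of $S_j$ (a block-diagonal permutation similarity on $B'$, which I absorb into $Y_0$), $\mu_{j,l}^{\,m_j}=\nu_{j,l}^{\,m_j}$ for all $l$. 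Hence each $\zeta_{j,l}:=\nu_{j,l}/\mu_{j,l}$ is an $m_j$-th root of unity, and $Z_j:=\diag(\zeta_{j,1},\dots,\zeta_{j,r_j})$ satisfies $Z_j^{m_j}=\1_{r_j}$ and $Z_jR_j=S_j$.

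\emph{Step 3: assemble $Z$ and conclude.} Put $Z:=\bigoplus_{j\in J}(Z_j\otimes\1_{D_j})$. Since $R_j$ and $Z_j$ are diagonal, $[A^i,Z]=0$; moreover $ZA^i=\bigoplus_{j\in J}(Z_jR_j\otimes A_j^i)=\bigoplus_{j\in J}(S_j\otimes A_j^i)=B'^i=Y_0B^iY_0^{-1}$, so $ZA^i=YB^iY^{-1}$ with $Y:=Y_0$, and consequently $\mc V(ZA)=\mc V(Y_0BY_0^{-1})=\mc V(B)=\mc V(A)$. In irreducible form II the matrices $W_j$ delivered by \cref{thm:bd} are unitary, so $Y$ is unitary as well.

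\emph{Main obstacle.} The substantive work is entirely in \cref{equal-or-orthogonal-generalized}, whose proof is deferred to \cref{app}; granting it, the only real points of care are (i) that the linear independence and non-vanishing of the $\{|V_N(A_j)\ra\}$ hold \emph{uniformly} for all large $N$ in each residue class $m_j\mid N$, so that the ``for all $N\ge N_0$'' version \cref{lem:momentsN0}, rather than the finite-range \cref{lem:moments}, is what applies to each block separately; and (ii) the bookkeeping of phases and block-diagonal similarities in Step 1, which is what guarantees that the $Z_j$ come out with only $m_j$-th roots of unity on the diagonal.
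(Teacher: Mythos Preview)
Your proof is correct and follows essentially the same approach as the paper's: invoke \cref{thm:bd} to align the bases (absorbing the phase into $S_j$), use the linear independence from \cref{equal-or-orthogonal-generalized} to isolate $\tr(R_j^N)=\tr(S_j^N)$ for large $N$ with $m_j\mid N$, apply \cref{lem:momentsN0} to the $m_j$-th powers to get the permutation and the $m_j$-th roots of unity, and then assemble $Y$ and $Z$. The only cosmetic difference is that you absorb the permutation $T_j$ into $Y_0$ rather than displaying it explicitly as the paper does with $Y=\bigoplus T_j\otimes Y_j$.
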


Note that any such $Z$ is a gauge degree of freedom in the MPS generated
by the tensor $A$, since the vector generated by block $A_j$,
$|V_N(A_j)\ra$, is non-zero only for $N$ multiple of $m_j$, and hence any
$m_j$-root of unity in the multiplicities of $A_j$ has no effect on the
state $|V_N(A)\ra$.

\begin{proof}
By Theorem \ref{thm:bd} we know that  both tensors have equivalent  bases
of periodic tensors, so that for all $j\in J$,  there is an invertible
$Y_j$ such that $B^i_j = Y_j A^i_{j} Y_j^{-1}$ (the phase can be absorbed
in $S_j$).  We thus have that $\vert V_N(A_j)\rangle = \vert
V_N(B_j)\rangle$. By assumption,
\[
\sum_j \mathrm{tr}(R_j^N) \vert V_N(A_j)\rangle = 
\sum_j \mathrm{tr}(S_j^N) \vert V_N(B_j)\rangle = 
\sum_j \mathrm{tr}(S_j^N) \vert V_N(A_j)\rangle\ ,
\]
and since by \cref{equal-or-orthogonal-generalized}, there is an $N_0$
such that the non-zero elements of $\{\vert V_N(A_j)\}_j$ are linearly
independent for $N\ge N_0$, we have that $\tr\left(R_j^N\right)=
\tr\left(S_j^N\right)$ for all $N\geq N_0$  such that $m_j \mid N$.
From Lemma \ref{lem:momentsN0}  we conclude that $r_j=s_j$ and that there
is a permutation $\pi$ such that $\mu_{j,l}^{m_j} = \nu_{j,\pi(l)}^{m_j}$
for all $l$, and thus 
$$
Z_j R_j =  T_j S_j T_j^\dagger,
$$
where $T_j$ is the matrix that implements the permutation $\pi$, 
and $Z_j$ has the desired properties. 
$Y$ is then constructed as $Y=\bigoplus T_j\otimes Y_j$ (and possibly permutations
of blocks), and $Z=\bigoplus Z_j\otimes \openone$.
\end{proof}


\section{Applications}
\label{sec:appl}

We now present some applications of the new fundamental theorem for MPS (\cref{thm:bdequal}). 
Namely, we first show an equivalence between the refinement properties of a state and the divisibility of its transfer matrix (\cref{ssec:div}), 
and then a more general characterisation of the tensors that give rise to MPS with symmetries (\cref{ssec:symm}).

\subsection{The refinement of a state and the divisibility of its transfer matrix}
\label{ssec:div}

In Ref.~\cite{De16}, certain continuum limits of translationally invariant
matrix product states are studied.  An indispensable tool for this study
is a relation between the refinement of a state  and the divisibility
properties of its transfer matrix (both concepts to be defined below),
which we establish in Theorem  \ref{thm:div} with the help of the new
fundamental theorem for MPS  (\cref{thm:bdequal}). 

We first need a couple of  definitions. 
A trace-preserving CP map 
$\E$ is called \emph{$p$-divisible} if there is another trace-preserving CP map $\E'$ such that $\E= \E'^p$ (where the latter denotes the $p$-fold application of the map). 
Given a tensor $B$, we say that  $\mc{V}(B)$ can be \emph{$p$-refined} if there exists another tensor $A$ and an isometry 
$W:\mathbb{C}^d\to (\mathbb{C}^d)^{\otimes p}$
such that 
\be
|V_{pN}(A)\ra = W^{\otimes N}|V_N(B)\ra \quad \forall N .
\label{eq:fine}
\ee

\begin{theorem}[$p$-refinement and $p$-divisibility] 
\label{thm:div}
Let $B$ be in irreducible form II. 
Then $\mc{V}(B)$ can be $p$-refined if and only if $\E_B$ is $p$-divisible. 
\end{theorem}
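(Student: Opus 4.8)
The plan is to prove both implications by relating the tensor $A$ refining $B$ to $B^{(p)}$, and then invoking the fundamental theorem (\cref{thm:bdequal}) to pin down the gauge freedom. For the easy direction, suppose $\E_B = \E'^{\,p}$ for some trace-preserving CP map $\E'$. Choosing a Kraus decomposition $\E'(\cdot)=\sum_k A^k \cdot A^{k\dagger}$, the composed Kraus operators $A^{i_1}\cdots A^{i_p}$ realise $\E_B$; I would then argue that, since $\E_B = \E_{B^{(p)}}$ as maps, the tensors $A^{(p)}=\{A^{i_1}\cdots A^{i_p}\}$ and $B^{(p)}$ have the same associated CP map and hence (after bringing things to irreducible/canonical form II and applying \cref{thm:bdequal} together with \cref{lem:bdcf}/\cref{lem:blocking-arbitrary}) generate proportional, indeed equal, families of states up to a local gauge. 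The isometry $W$ is then read off from the Kraus operators of $\E'$: concretely $W=\sum_{i_1,\dots,i_p}\ket{i_1,\dots,i_p}\bra{i}\otimes(\text{contraction of }A\text{'s})$, and the trace-preserving (unitality of the dual) condition is exactly what makes $W$ an isometry. One must check $|V_{pN}(A)\ra = W^{\otimes N}|V_N(B)\ra$, which follows by writing out the trace of the product and using $W$'s definition.

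For the converse, assume $\mc{V}(B)$ can be $p$-refined via a tensor $A$ and isometry $W$ as in \eqref{eq:fine}. The idea is that $W^{\otimes N}$ being an isometry gives $\la V_{pN}(A)|V_{pN}(A)\ra = \la V_N(B)|V_N(B)\ra$ for all $N$, so $A$ and $B^{(p)}$ generate families with the same norms; more importantly, contracting \eqref{eq:fine} against a dual MPS shows $|V_{pN}(A)\ra$ and $|V_{pN}(B^{(p)})\ra = |V_N(B)\ra$ coincide. Hence $\mc{V}(A^{(1)})$ restricted to lengths divisible by $p$ equals $\mc{V}(B^{(p)})$; passing to irreducible form II for $A$ and applying the fundamental theorem (\cref{thm:bdequal}) to $A$ and $B^{(p)}$ — using \cref{lem:bdcf} to know $B^{(p)}$'s structure — yields an invertible (indeed, in form II, unitary) gauge $Y$ with $A^{(p\text{-blocked pieces})} = Y B^{(p)} Y^{-1}$ up to the $Z$-phases, which are themselves gauge. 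From $W$ being an isometry and the gauge relation, I would then extract a trace-preserving CP map $\E'$ built from the Kraus operators $WA^i$ (or from the single-site tensor $A$ together with the contraction structure of $W$) satisfying $\E'^{\,p}=\E_B$. Verifying that $\E'$ is trace-preserving is where the isometry condition on $W$ is used decisively: $\sum W^\dagger W = \1$ translates, after the gauge identification, into unitality of $\E'^*$.

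The main obstacle I anticipate is the bookkeeping around the gauge matrix $Z$ and the multiplicities/periodicities: the refined tensor $A$ need not itself be periodic with the ``expected'' period, and blocking mixes up the periodic blocks according to $\gcd(m,p)$ as in \cref{lem:blocking-arbitrary}. So after the fundamental theorem produces $ZA^i = YB^{(p),i}Y^{-1}$ with $[Z,A^i]=0$ and $\mc{V}(A)=\mc{V}(ZA)$, I must check that this residual $Z$ does not obstruct writing $\E_A$ (and hence, via the isometry, $\E'$) as a genuine trace-preserving CP map — i.e. that replacing $A$ by $ZA$ changes $\E_A$ only by a unitary conjugation of Kraus operators within each periodic sector, leaving $\E_A$ (and the composed map) intact. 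Equivalently, one shows $\E_{ZA}=\E_A$ on the relevant length scales, which follows because $Z$ is block-diagonal unitary commuting with the $A^i$. Once that is settled, defining $\E'(\cdot)=\sum_i (\text{the appropriate Kraus operators from }A)\cdot(\cdots)^\dagger$ and computing $\E'^{\,p}=\E_{A^{(p)}}=\E_B$ is routine. I would also be careful that $W$ being an isometry (not just a contraction) is genuinely needed and used, since this is what guarantees $\E'$ is trace-preserving rather than merely trace-non-increasing.
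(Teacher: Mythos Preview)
Your plan has a genuine gap in the hard direction (refinable $\Rightarrow$ divisible), and it is precisely the point where the paper does real work.

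First, a bookkeeping error: you cannot apply \cref{thm:bdequal} to ``$A$ and $B^{(p)}$'' --- they do not even have the same physical dimension. The correct move (and the paper's) is to define $C^{i_1,\dots,i_p}:=\sum_i W^{(i_1,\dots,i_p),i}B^i$, observe that $C$ is again in irreducible form II with $\E_C=\E_B$ (since $W$ is an isometry), and note that $|V_{pN}(A)\rangle=|V_N(C)\rangle$. One then applies the fundamental theorem to $A^{(p)}$ and $C$, obtaining
\[
Z\,A^{(p),\mathbf{i}}=U\,C^{\mathbf{i}}\,U^\dagger,\qquad [Z,A^{(p),\mathbf{i}}]=0.
\]

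Here is the actual gap. You write that ``$Z$ is block-diagonal unitary commuting with the $A^i$'' and therefore $\E_{ZA}=\E_A$. Neither claim is justified: the fundamental theorem only gives $[Z,A^{(p)}]=0$, \emph{not} $[Z,A^i]=0$; and even if $Z$ did commute with $A^i$, one would get $\E_{ZA}=\E_A\circ\mathrm{Ad}_Z$, not $\E_A$. So from $ZA^{(p)}=UCU^\dagger$ you cannot simply ``absorb $Z$'' and declare $\E_A^p=\E_C$. What is needed is a tensor $\tilde A$ at the \emph{single-site} level with $\tilde A^{(p)}=C$, and this requires distributing the phase $Z$ (which lives on $p$-blocks) back to individual sites. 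The paper does this explicitly: on each periodic block $A_j$ (periodicity $m_j$, $r_j=\gcd(p,m_j)$), $Z$ acts on the $A^{(p)}$-subblocks $\tilde P_{j,\alpha}A_j^{(p)}$ by $(m_j/r_j)$-th roots of unity $c_{j,\alpha}$; one then sets $A'^i_j=\sum_u d_{j,u}P_{j,u}A_j^i$ with $d_{j,u}=c_{j,\alpha_{u+1}}^{\,k_{u+1}}/c_{j,\alpha_u}^{\,k_u}$ (using the unique decomposition $u=[\alpha_u+pk_u]_{m_j}$ from \cref{lem:unique-dec}), so that the telescoping product along $p$ sites reproduces $c_{j,\alpha_u}$ and hence $A'^{(p)}=ZA^{(p)}$. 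Setting $\tilde A=U^\dagger A'U$ then gives $\tilde A^{(p)}=C$ and $\E_{\tilde A}^p=\E_C=\E_B$. This construction --- the single-site ``$p$-th root'' of $Z$ built from the off-diagonal projectors $P_{j,u}$ --- is the heart of the argument and is entirely missing from your plan.

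For the easy direction, you are overcomplicating things: once $\E_B=\E_A^p$, the Kraus operators $\{A^{i_1}\cdots A^{i_p}\}$ and $\{B^i\}$ represent the same channel, so they are related by an isometry $W$ via $A^{i_1}\cdots A^{i_p}=\sum_i W^{(i_1,\dots,i_p),i}B^i$, and \eqref{eq:fine} follows immediately. No appeal to \cref{thm:bdequal} is needed here, and your formula for $W$ (with a tensor factor ``$\otimes(\text{contraction of }A\text{'s})$'') is not well-formed --- $W$ acts purely on the physical index.
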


\begin{proof}
We first show that if $\mc V(B)$ can be $p$-refined, then $\mathcal
E_B$ is $p$-divisible. Our initial assumption is thus that for all $N$
\[
|V_{pN}(A)\ra = W^{\otimes N}|V_N(B)\ra =\vert V_N(C)\rangle,\ 
\]
where we have defined the tensor $C$ through
$$
C^{i_1,\ldots,i_p} := \sum_{i=1}^d W^{{(i_1,\ldots,i_p)},i}B^i.
$$
If we see $i_1,\ldots,i_p$ as a single physical index, it is clear that
$C$ is also in irreducible form II. 
Our goal is now to find an $\tilde A$ with $\vert V_{pN}(\tilde A)\ra =
\vert V_{pN}(A)\rangle$ such that $\mathcal E_{\tilde A}^p = \mathcal
E_C=\mathcal E_B$; in fact, we will construct an $\tilde A$ for which
even $\tilde A^{(p)}=C$.

First, by \cref{thm:irr}, we can assume without loss of generality that $A$ is in 
irreducible form II. Following \cref{lem:blocking-arbitrary}, $A^{(p)}$ is
then in irreducible form II as well.  Theorem \ref{thm:bdequal} then
implies that 
\[
ZA^{i_1}\cdots A^{i_p}=
ZA^{(p),i_1,\dots,i_p}=
UC^{i_1,\ldots,i_p}U^{\dagger}\ .
\]
In order to construct an $\tilde A$ such that  $\tilde A^{(p)}=C$, we thus
would like to distribute $Z$ evenly across all $A$'s and absorb it in $A$
($U$ is easily taken care of), which is subtle since $Z$ only commutes
with $A^{(p)}$.

So, to this end, consider a periodic block $A_j$  with periodicity $m_j$ in the
irreducible decomposition \eqref{eq:irreducible-form} of $A$, and let
$r_j={\rm gcd}(p,m_j)$. From \cref{lem:blocking-arbitrary}, we know
that the blocks that arise from $A_j$ in the irreducible decomposition of
$A^{(p)}$ are precisely $\tilde{P}_{j,\alpha} A_j^{(p)}$,
with
$\alpha=0,\ldots, r_j-1$, where
$$
\tilde{P}_{j,\alpha}=
\sum_{k=0}^{\frac{m_j}{r_j}-1} P_{j, [\alpha + pk]_{m_j}}\ ,
$$
with $P_{j,\alpha}$ defined in Eq.~\eqref{eq:Aoffdiag}.
Further, $Z$ (restricted to the support of $A_j$) acts on them as 
\begin{equation}
\label{eq:ZPA-is-cPA}
Z\tilde{P}_{j,\alpha} A_j^{(p)} =
c_{j,\alpha}\tilde{P}_{j,\alpha}
A_j^{(p)}\ ,
\end{equation}
where $c_{j,\alpha}^{m_j/r_j}=1$.
We now define
$$
A'^i_j=\sum_{u=0}^{m_j-1} d_{j,u} {P}_{j,u} A_j^i\ ,
$$
where $d_{j,u}={c_{j,\alpha_{u+1}}^{k_{u+1}}}/{c_{j,\alpha_u}^{k_u}}$,
with 
$\alpha_u\in\{0,\ldots,r_j-1\}$ and $k_u\in\{0,\ldots \tfrac{m_j}{r_j}-1\}$
such that  $u=[\alpha_u+pk_u]_m$ (those are unique,
cf.~\cref{lem:unique-dec}).
It follows that
\begin{equation}
\label{eq:Aprime-is-cPA}
A_j'^{(p)}=\sum_{u=0}^{m_j-1}\Bigg( \prod_{k=0}^{p-1}
d_{j,u+k}\Bigg) {P}_{j,u} A_j^{(p)}=
\sum_{u=0}^{m_j-1} c_{j,\alpha_u} {P}_{j,u} A_j^{(p)}\ ,
\end{equation}
where we have used
$\prod d_{j,u+k}=
{c_{j,\alpha_{u+p}}^{k_{u+p}}}/{c_{j,\alpha_u}^{k_u}}=c_{j,\alpha_u}$
(since by definition $k_{u+p}=k_u+1$ and $\alpha_{u+p}=\alpha_u$).
Comparing Eqs.~\eqref{eq:ZPA-is-cPA} and \eqref{eq:Aprime-is-cPA}, we
find that
for $A':=\bigoplus_{j\in\tilde{J}} \mu_j A'_j$ (cf.~\eqref{eq:irreducible-form}), 
we have that $A'^{(p)}=ZA^{(p)}$, and thus, defining $\tilde A=U^\dagger A' U$, we
have that $\tilde A^{(p)}=C$ and thus 
$\E_{\tilde A}^p=\E_C=\E_B$, completing the proof.

The proof of the converse is straightforward:  If $\E_B$ is $p$-divisible,
there exists an $A$ such that $\E_B=\E_A^p$.  Different Kraus
representations of the same channel are related by an isometry $W$ through
$$
A^{i_1}\cdots A^{i_p } = \sum_{i=1}^d W^{{(i_1,\ldots,i_p)},i}B^i\ , 
$$
which immediately implies \eqref{eq:fine}.
\end{proof}


The consequences of this theorem will be explored in \cite{De16}.


\subsection{Characterization of symmetries}
\label{ssec:symm}

For MPS in canonical form, the characterisation of the tensors that give rise to MPS with symmetries was studied in \cite{Pe08}. 
An extension of this characterisation to general MPS follows immediately from our new fundamental theorem (\cref{thm:bdequal}). 

\begin{corollary}[Symmetries]
Let $A$ be in irreducible form II.
If there is a local unitary matrix $u$ such that
$|V_N(A)\ra = u^{\otimes N} |V_N(A)\ra$ for all $N$, 
then there is a diagonal unitary $Z$ 
such that $[A^i,Z]=0$ for all $i$ and $\mc{V}(A) = \mc{V}(ZA) $, 
and another unitary $U$ such that for all $i'$
\be
\sum_{i}u^{i',i} A^i = Z U A^{i'} U^\dagger .
\label{eq:symm}
\ee
\end{corollary}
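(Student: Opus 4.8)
The plan is to apply the fundamental theorem for MPS in irreducible form (\cref{thm:bdequal}) to the two tensors $A$ and $A^{u} := \{\sum_i u^{i',i} A^i\}_{i'}$. First I would observe that the symmetry hypothesis $|V_N(A)\ra = u^{\otimes N}|V_N(A)\ra$ for all $N$ says precisely that $|V_N(A^{u})\ra = |V_N(A)\ra$ for all $N$, since applying $u^{\otimes N}$ to the MPS vector amounts to transforming each local matrix $A^i \mapsto \sum_{i'} (u^{\dagger})^{i,i'} \cdots$; one just has to track the index placement carefully to see that $A^{u}$ as defined generates exactly $u^{\otimes N}|V_N(A)\ra$ (equivalently $u^{-\otimes N}$, depending on the convention, which only changes $u$ to $u^\dagger$ and is harmless). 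Next, since $A$ is in irreducible form II and applying a fixed unitary $u$ on the physical index does not touch the virtual indices, the tensor $A^{u}$ is again in irreducible form II, with the \emph{same} block structure, the same periods $m_j$, the same multiplicities $R_j$, and blocks $A^{u}_j$ that are just the $u$-rotated versions of the $A_j$ (in particular each $\E_{A^u_j}$ is still irreducible of spectral radius $1$, being unitarily related on the physical index to $\E_{A_j}$).

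Then I would invoke \cref{thm:bdequal} directly with $B = A^{u}$. It yields a diagonal unitary $Z = \bigoplus_j Z_j \otimes \1_{D_j}$ with $Z_j^{m_j} = \1$, hence $[A^i, Z] = 0$ and $\mc{V}(A) = \mc{V}(ZA)$, together with an invertible (in irreducible form II, unitary) matrix $U$ such that
$$
Z A^{u,i'} = U A^{i'} U^\dagger \qquad \text{for all } i',
$$
which upon substituting the definition of $A^{u,i'} = \sum_i u^{i',i} A^i$ is exactly Eq.~\eqref{eq:symm}. The statement that $Z$ is a genuine gauge degree of freedom — i.e.\ $\mc{V}(A) = \mc{V}(ZA)$ — is part of the conclusion of \cref{thm:bdequal} and the remark following it, so nothing extra is needed there.

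The only real obstacle is bookkeeping: verifying cleanly that $u^{\otimes N}|V_N(A)\ra = |V_N(A^{u})\ra$ with the chosen index convention for $A^{u}$, and checking that conjugating the physical index by $u$ commutes with the whole irreducible-form construction (block decomposition, normalization by spectral radii, the reduction to irreducible form II), so that $A$ and $A^u$ are simultaneously in irreducible form II and \cref{thm:bdequal} applies verbatim. All of this is routine once the conventions are pinned down; after that the corollary is an immediate specialization of the fundamental theorem to the case $B = A^u$.
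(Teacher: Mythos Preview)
Your approach is correct and is exactly what the paper does: the corollary is stated as following ``immediately'' from \cref{thm:bdequal}, applied to the pair $(A,A^u)$, with the observation (which you supply) that $A^u$ is again in irreducible form II with the same block structure because $\mathcal E_{A^u_j}=\mathcal E_{A_j}$. The only cosmetic slip is that with your labeling ``$B=A^u$'' the theorem literally gives $ZA^{i'}=UA^{u,i'}U^\dagger$ rather than $ZA^{u,i'}=UA^{i'}U^\dagger$; either direction works (swap the roles or replace $Z$ by $Z^{-1}$, which has the same properties), so this is precisely the harmless bookkeeping you flagged.
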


In \eqref{eq:symm} $u^{i',i}$ denotes the components of $u$. 
The consequences of this result will be explored elsewhere.

\section{Conclusions}
\label{sec:concl}

In this paper, we have introduced a new standard form for tensors
generating Matrix Product States, the \emph{irreducible form}. The
irreducible form generalizes the canonical form for MPS and is directly
applicable to general MPS, including those with non-trivial periodicity.
We have provided a constructive way to transform any MPS tensor $A$ into
a tensor $A'$ in irreducible form which generates the same family of MPS.
We have then derived a fundamental theorem for MPS in irreducible form ---
namely, we have shown that if two tensors $A$ and $B$ in irreducible form
give rise to the same family of translationally invariant MPS,
$\mc{V}(A)=\mc{V}(B)$, then these two tensors must be related by a
similarity transform $Y$ and a diagonal matrix of phases $Z$, namely $ZA^i
= YB^iY^{-1}$,  where $Z$ commutes with $A$, and is ``invisible'' to
the state, $\mc{V}(A)=\mc{V}(ZA)$.  This result generalizes the
fundamental theorem for MPS in canonical form, which yields the same
statement but without the $Z$. 

We have then presented two applications of this result. The first is a theorem
that proves that the refinement properties of a state are equivalent to
the divisibility properties of its associated quantum channel.  The second
is a characterisation of tensors that give rise to matrix product states
with symmetries.  Finally, our findings could also be applicable to further
scenarios where the fundamental theorem of MPS in canonical form has
proven useful, such as in the characterisation of 2D topological order
through Matrix Product Operators~\cite{Bu17}, or in the classification of
symmetry protected phases in one dimension~\cite{Ch11,Sc11}.

\subsection*{Acknowledgements} 
GDLC acknowledges support from the Elise Richter Fellowship of the FWF. 
This work was supported in part by the Perimeter Institute of Theoretical Physics through the Emmy Noether program. Research at Perimeter Institute is supported by the Government of Canada through Industry Canada and by the Province of Ontario through the Ministry of Economic Development and Innovation. 
DPG acknowledges support from  MINECO (grant MTM2014-54240-P), Comunidad de Madrid (grant QUITEMAD+-CM, ref. S2013/ICE-2801), and Severo Ochoa project SEV-2015-556. 
This project has received funding from the European Research Council (ERC) under the European Union's Horizon 2020 research and innovation programme (grant agreement GAPS No 648913).
NS acknowledges support by the European Union through the ERC-StG WASCOSYS (Grant No. 636201).
JIC acknowledges
support from the DFG through the NIM (Nano Initiative Munich).


\appendix


\section{Proof of  \cref{equal-or-orthogonal-generalized}}
\label{app}

Here we  prove \cref{equal-or-orthogonal-generalized} in the case of irreducible form II.  

First note that since $\la V_N(A)|V_N(A)\ra={\rm tr}(E_A^N)$, with
$E_A=\sum_{i=1}^d A^i\otimes \bar A^i$ the Choi matrix of $\mathcal E_A$
which has the same spectrum as $\mathcal E_A$, it is clear that $$
\lim_{N\to \infty}\la V_N(A) |V_N(A)\ra =m_a,
$$
for $N \in m_a \mathbb{N}$, 
and a similar equation holds for $B$.

For the rest, the first thing to prove is that if $A$ and $B$ have different periods $m_a\not = m_b$, then 
$$
 \lim_{N\to\infty} |\langle V_N(A)|V_N(B)\rangle|=0.
$$
Clearly, if $N$ is not a multiple of both $m_a$ and $m_b$, then $ |\la V_N(A)|V_N(B)\ra |=0$. So let us consider $N$ of the form $kp$ with $p={\rm lcm}(m_a,m_b)$. 
By \cref{lem:bdcf},  it is enough to show that $P_uA^{(p)}$ and $Q_v B^{(p)}$ cannot be related by 
\be
\label{eq:auxiliar-appendix}
 P_uA^{(p)} = e^{i\xi }U Q_{v} B^{(p)}U^{\dagger} .
 \ee
 For simplicity, we will present the argument for the case in which
$m_a=1, m_b>1$ (in which case $P_u=\openone$).  The general case is  analogous. 
 
Eq.\ \eqref{eq:auxiliar-appendix} implies that 
$$
\sum_{i_1\ldots i_{pN}} 
\tr(A^{i_1}\cdots A^{i_{pN}} ) |i_1\ldots i_{pN}\ra  =e^{i\xi N}
 \sum_{i_1\ldots i_{pN}} 
\tr(Q_{v} B^{i_1}\cdots B^{i_{pN}} ) |i_1\ldots i_{pN}\ra  
$$
for all $N$.
Applying the translation operator by one site to both sides of the equation, 
and using that the left hand side is translationally invariant and that $Q_{v}B^i = B^i Q_{v+1}$, 
we obtain that
$$
 \sum_{i_1\ldots i_{pN}} 
\tr(Q_{v} B^{i_1}\cdots B^{i_{pN}} ) |i_1\ldots i_{pN}\ra   = 
 \sum_{i_1\ldots i_{pN}} 
\tr(Q_{v+1} B^{i_1}\cdots B^{i_{pN}} ) |i_1\ldots i_{pN}\ra   
$$
for all $N$. 
But by \cref{lem:bdcf}, $Q_{v} B^{(p)}$ and 
$Q_{v+1} B^{(p)}$ are non-repeated blocks, 
and thus they cannot generate states equal to each other for all $N$, which is the desired contradiction.

So assume that the periods of $A$ and $B$ are both $m$. We consider two
separate cases. In the first case, there do not exist $u$, $v$ such that 
$$
 P_uA^{(m)} = e^{i\xi }U Q_{v} B^{(m)}U^{\dagger} .
$$
In this case, using \cref{equal-or-orthogonal} we obtain that 
$$
 \lim_{N\to\infty} |\la V_N(A)|V_N(B)\ra |=0.
$$ 
So we are only left with the case in which there exist $\tilde u,\tilde v\in
\{1,\ldots,m\}$, a phase $\lambda_{\tilde v}$ and a unitary $V$ such that \be
P_{\tilde u} A^{(m)}  = 
e^{i\lambda_{\tilde v}} V  Q_{\tilde v} B^{(m)}   V^{\dagger}.
\label{eq:ABm} 
\ee 
In the rest of the proof we will show that, in this case, 
there is a phase $\xi$ 
and a unitary matrix $U$ so that 
$A^i = e^{i\xi} U B^i U^{\dagger}. $ 
This will finish the proof of \cref{equal-or-orthogonal-generalized}.

First note that since $P_{\tilde u}$ and $Q_{\tilde v}$ are projectors,
Eq.\ \eqref{eq:ABm} implies that we can choose $V = P_{\tilde u}VQ_{\tilde
v}$.  (As earlier, $V$ is a unitary if restricted to its support and range.)
Eq.~\eqref{eq:ABm} also implies that for all $N$
\be
\sum_{i_1,\ldots, i_{mN}} \tr(P_{\tilde u} A^{i_1} \cdots A^{i_{mN}} ) |i_1,\ldots, i_{mN}\ra 
&=&
e^{i\lambda_{\tilde v} N}\sum_{i_1,\ldots, i_{mN}} 
		\tr(Q_{\tilde v}B^{i_1} \cdots B^{i_{mN}} ) |i_1,\ldots, i_{mN}\ra .\quad 
\label{eq:Nm}
\ee
If $T$ is the translation operator, applying $T^l$ for $l=1,2,\ldots, m-1$ on both sides of Eq.\ \eqref{eq:Nm}, we obtain that 
$$
\sum_{i_1,\ldots, i_{mN}} \tr(P_{\tilde u+l} A^{i_1} \cdots A^{i_{mN}} ) 
    |i_1,\ldots, i_{mN}\ra 
= 
e^{i\lambda_{\tilde v} N} \sum_{i_1,\ldots, i_{mN}} 
	\tr(Q_{\tilde v+l}B^{i_1} \cdots B^{i_{mN}} ) |i_1,\ldots, i_{mN}\ra . \qquad 
$$
We have by \cref{lem:bdcf} that $P_{\tilde u+l} A^{(m)}$ and $Q_{\tilde
v+l} B^{(m)}$ are normal tensors, and thus from \cref{thm:cf} it follows
that for every $l=1,\ldots, m-1$ there is a phase $\lambda_{\tilde v+l}$
and a unitary $U_{\tilde v+l}=P_{\tilde u+l} U_{\tilde v+l} Q_{\tilde
v+l}$ such that 
\be
&&P_{\tilde u+l} \: A^{i_1} \cdots A^{i_m}  = e^{i\lambda_{\tilde v+l}}
U_{\tilde v+l}\: Q_{\tilde v+l} \: B^{i_1} \cdots B^{i_m}U_{\tilde v+l}^{\dagger} .
\label{eq:blockedABprop}
\ee
Thus, for any $u=\tilde u+l$ and $v=\tilde v+l$, we have that $v-u=\tilde
u-\tilde v=:q$, i.e., $u$ and $v$ are related through a cyclic permutation.
Thus, throughout the rest of the proof we will use that 
\be
v= u+q \mod m. \label{eq:vprop}
\ee

Now, define the tensors
\be
&&A^i_u := P_u A^{i} P_{u+1}, \label{eq:Auprop}\\ 
&&B_{v}^i := U_v  Q_v B^{i} Q_{v+1} U_{v+1}^{\dagger},  \label{eq:Cvprop}
\ee
so that Eq.\ \eqref{eq:blockedABprop} reads
\be \label{eq:BCmprop}
&&A_u^{i_1}A_{u+1}^{i_2}\cdots A_{u+m-1}^{i_m} = 
e^{i\lambda_{v}}
B_{v}^{i_1} B_{v+1}^{i_2}\cdots B_{v+m-1}^{i_m}, \quad  \textrm{for } u=1,\ldots, m. 
\ee
By Lemma \ref{lem:bdcf}, $A_u^{i_1}\cdots A_{u+m-1}^{i_m}$  is normal for every $u$, and 
thus there is some length $N_0$ at which it becomes injective~\cite{Ci15}. 
Consider the tensor 
\be
F_u^\mathbf{i} := 
A_u^{i_1}A_{u+1}^{i_2}\cdots A_{u+m-1}^{i_{mN_0}} ,
\label{eq:Fu}
\ee
where $\mathbf{i}=(i_1,\dots,i_{mN_0})$. Since on the right hand side of
Eq.~\eqref{eq:Fu}, $A_u^{i_1}\cdots A_{u+m-1}^{i_m}$ is repeated $N_0$
times, $F_u^\mathbf{i}$ is injective. Thus there is an inverse 
 $\Omega_u$ such that 
\begin{equation}
\sum_{\mathbf{i}} (\Omega_u^\mathbf{i})_{\alpha,\beta} 
(F_u^\mathbf{i})_{\alpha',\beta'} =
\delta_{\alpha,\alpha'}\delta_{\beta,\beta'}, \quad u=1,\ldots, m.
\label{eq:Omegauprop}
 \end{equation}
Now consider the concatenation of tensors
$$
A_u^{i_1} F_{u+1}^{\mathbf{i}_1} A_{u+1}^{i_2} F_{u+2}^{\mathbf{i}_2} \cdots A_{u+m-1}^{i_{m-1}} F_{u}^{\mathbf{i}_{m-1}}.
$$
Note that this is simply the tensor $A^{i_1}_u\cdots A^{i_m}_{u+m-1}$
repeated $mN_0+1$ times, which by Eq.\ \eqref{eq:BCmprop} equals
$e^{i\lambda_v(mN_0+1)}$ times $(B^{i_1}_u\cdots B^{i_m}_{u+m-1})$
repeated $mN_0+1$ times.  We apply the inverses 
$$
\Omega_{u+1}^{ \mathbf{i}_1}\otimes \Omega_{u+2}^{ \mathbf{i}_2} \otimes
\cdots \otimes \Omega_{u+m-1}^{ \mathbf{i}_{m-1}}
$$
to it. Using \eqref{eq:Omegauprop} we obtain 
$A_u^{i_1}\otimes A_{u+1}^{i_2}\otimes \ldots  A_{u+m-1}^{i_m},$ and using 
Eq.\ \eqref{eq:BCmprop}, we obtain
$e^{i\eta_v}
B_v^{i_1}\otimes B_{v+1}^{i_2}\otimes \ldots  B_{v+m-1}^{i_m} $, 
where
$$
\eta_v := \lambda_v (mN_0+1) - N_0\sum_{l=1}^m \lambda_l. 
$$
That is, we have found that 
\be
A_u^{i_1}\otimes A_{u+1}^{i_2} \otimes \ldots  A_{u+m-1}^{i_m}
= 
e^{i\eta_v} B_{v}^{i_1}\otimes  B_{v+1}^{i_2}  \otimes \ldots  B_{v+m-1}^{i_m}  \qquad \textrm{for } u=1,\ldots, m .
\label{eq:resultprop}
\ee  

Applying  the translation operator $l$ times to  both sides of Eq.\ \eqref{eq:resultprop}, for $l=1,\ldots, m-1$, and using that Eq.\ \eqref{eq:resultprop} is valid for all $i_1,\ldots, i_m$, we obtain that $\eta_v$ is independent of $v$, 
which we will henceforth call  $\eta:=\eta_v$. 
Now, \eqref{eq:resultprop} gives
\be
A_{u}^{i} = \kappa_v  e^{i\eta/m}  \: B_{v}^{i} , \quad \textrm{for }u=1,\ldots, m, 
 \label{eq:thetaACprop}
\ee
with
\be
 \: \prod_{v=1}^{m} \kappa_v = 1.
\label{eq:prodkappaprop}
\ee
Using now that the operator norm $\left\|\sum_i A_u^{i \dagger} A_u^i\right\|=
\left\|\sum_i B_v^{i \dagger} B_v^i\right\|=1$, we obtain that  $|\kappa_v|=1$ for all $v$.  Thus $\kappa_v$ can be written as $\kappa_v = e^{i\theta_v} $
with $\theta_v\in \mathbb{R}$.

Therefore, recalling the definitions of $A_u$ (Eq.\ \eqref{eq:Auprop}) and $B_v$ (Eq.\ \eqref{eq:Cvprop}),  Eq.\ \eqref{eq:thetaACprop} can be written as 
\be
P_uA^i P_{u+1}   = e^{i\theta_v} 
e^{i\eta/m}
U_v Q_{v} B^i Q_{v+1} U_{v+1}^{\dagger}, \quad  \textrm{for } u=1,\ldots, m. 
\ee
Additionally, from Eq.\ \eqref{eq:prodkappaprop}  we have that $\sum_{v=1}^m \theta_v =0 \mod 2\pi$. 
Thus there are some other phases $\{\phi_v\}_{v=1}^m$ such that 
$\theta_v = \phi_v -\phi_{v+1}\mod 2\pi$ for $v=1,\ldots, m, $
where the sum $v+1$ is modulo $m$.
These can be chosen as 
\be
&& \phi_1=0, \quad
 \phi_v = - \sum_{l=1}^{v-1} \theta_l,
  \qquad \textrm{for }v=2,\ldots, m.
\ee
With this definition, we have that 
\be
 P_uA^i P_{u+1}   
&=& e^{i\eta/m}e^{i\phi_v}U_v Q_{v} B^i Q_{v+1} U_{v+1}^{\dagger} e^{-i\phi_{v+1}} 
, \qquad \textrm{for }u=1,\ldots, m.
\label{eq:result}
\ee
Thus, finally, defining 
$U  =  \sum_{u=1}^{m} e^{i\phi_{u+q}} P_{u}  U_{u+q} Q_{u+q}$ 
(where we have used the definition of $v$, Eq.\ \eqref{eq:vprop}), 
and the form of $A$ and $B$), Eq.\ \eqref{eq:result} can be written as 
\be
 A^i = e^{i\xi} U B^i U^\dagger,  
\ee
where $\xi =\eta/m $.


\end{document}